\newcommand{\ratnorm}[1]{\|#1\|_{\mathsf{frac}}}
\newcommand{\norm}[1]{\|#1\|_\infty}
\newcommand{\bx}{\bm{x}}
\newcommand{\bd}{\bm{d}}
\newcommand{\bc}{\bm{c}}
\newcommand{\bb}{\bm{b}}
\newcommand{\ba}{\bm{a}}
\newcommand{\bz}{\bm{z}}
\newcommand{\by}{\bm{y}}
\newcommand{\br}{\bm{r}}
\newcommand{\bw}{\bm{w}}
\newcommand{\bu}{\bm{u}}
\newcommand{\bv}{\bm{v}}
\newcommand{\bzero}{\bm{0}}
\DeclareMathOperator{\cone}{cone}
\DeclareMathOperator{\convhull}{conv.hull}
\DeclareMathOperator{\linspace}{lin.space}
\DeclareMathOperator{\charcone}{char.cone}
\newcommand{\Z}{\mathbb{Z}}
\newcommand{\N}{\mathbb{N}}
\newcommand{\Q}{\mathbb{Q}}
\renewcommand{\R}{\mathbb{R}}
\newcommand{\BEPA}{\ensuremath{\exists^{\le}\mathrm{PA}}\xspace}
\newcommand{\EPA}{\ensuremath{\exists\mathrm{PA}}\xspace}
\title{An efficient quantifier elimination procedure for {P}resburger arithmetic}
\author{Christoph Haase}{Department of Computer Science, University of Oxford, UK}{christoph.haase@cs.ox.ac.uk}{https://orcid.org/0000-0002-5452-936X}{}
\author{Shankara Narayanan Krishna}{Department of Computer Science \& Engineering, IIT Bombay, India}{krishnas@cse.iitb.ac.in}{https://orcid.org/0000-0003-0925-398X}{}
\author{Khushraj Madnani}{Max Planck Institute for Software Systems (MPI-SWS), Germany}{kmadnani@mpi-sws.org}{https://orcid.org/0000-0003-0629-3847}{}
\author{Om Swostik Mishra}{Department of Mathematics, IIT Bombay, India}{21b090022@iitb.ac.in}{https://orcid.org/0009-0001-6858-6605}{}
\author{Georg Zetzsche}{Max Planck Institute for Software Systems (MPI-SWS), Germany}{georg@mpi-sws.org}{https://orcid.org/0000-0002-6421-4388}{}
\authorrunning{Haase, Krishna, Madnani, Mishra, and Zetzsche}
\keywords{Presburger arithmetic, quantifier elimination, parametric integer programming, convex geometry}
\begin{document}

\thispagestyle{empty} %

\maketitle

\funding{\flag[3cm]{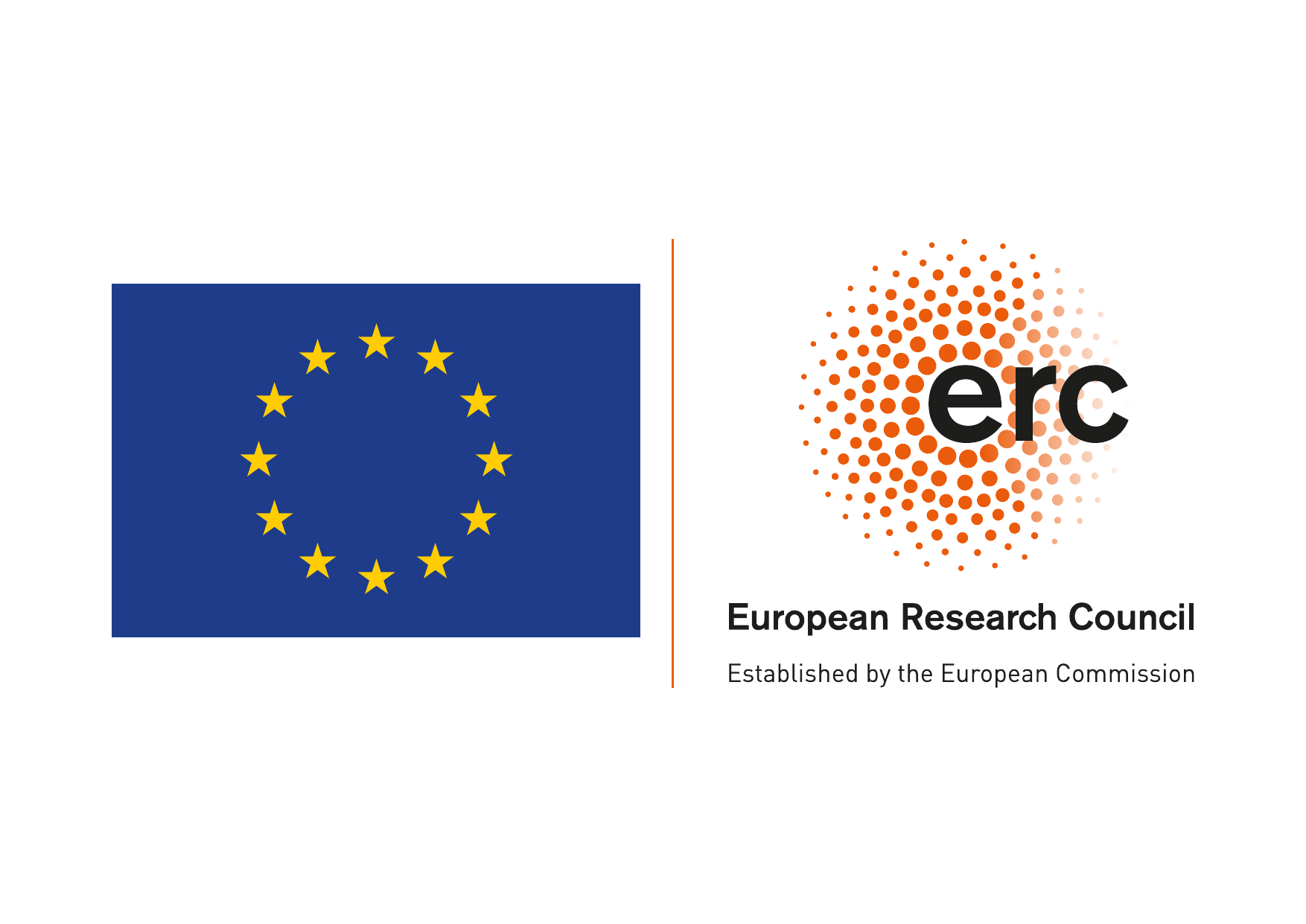}Funded by the European Union (ERC,
  FINABIS, 101077902). Views and opinions expressed are however those
  of the author(s) only and do not necessarily reflect those of the
  European Union or the European Research Council Executive
  Agency. Neither the European Union nor the granting authority can be
  held responsible for them. Christoph Haase is supported by the
  European Research Council (ERC) under the European Union’s Horizon
  2020 research and innovation programme (Grant agreement No. 852769,
  ARiAT).}

\acknowledgements{We are grateful to 
(i)~Pascal Bergstr\"a{\ss}er, Moses Ganardi, and Anthony W. Lin for discussions about Weispfenning's lower bound, 
(ii)~Pascal Baumann, Eren Keskin, Roland Meyer for discussions on polyhedra, and 
(iii)~Anthony W. Lin and Matthew Hague for explaining some aspects of their results on monadic decomposability.}

\begin{abstract}
  All known quantifier elimination procedures for Presburger arithmetic require
	doubly exponential time for eliminating a single block of existentially
	quantified variables. It has even been claimed in the literature that
	this upper bound is tight. We observe that this claim is incorrect and
	develop, as the main result of this paper, a quantifier elimination
	procedure eliminating a block of existentially quantified variables in
	singly exponential time. As corollaries, we can establish the precise
	complexity of numerous problems. Examples include deciding (i)~monadic
	decomposability for existential formulas, (ii)~whether an existential
	formula defines a well-quasi ordering or, more generally, (iii)~certain
	formulas of Presburger arithmetic with Ramsey quantifiers. Moreover,
	despite the exponential blowup, our procedure shows that under mild
	assumptions, even $\NP$ upper bounds for decision problems about
	quantifier-free formulas can be transferred to existential formulas.
	The technical basis of our results is a kind of small model property
	for parametric integer programming that generalizes the seminal results
	by von zur Gathen and Sieveking on small integer points in convex
	polytopes.
\end{abstract}

\section{Introduction}

Presburger arithmetic is the first-order theory of the integers with
addition and order. This theory was shown decidable by Moj\.zesz
Presburger in 1929~\cite{Pre29} by establishing a quantifier elimination procedure
in the extended structure additionally consisting of infinitely many
predicates $m \mid {\cdot}$ for all integers $m>0$, asserting
divisibility by a constant. Recall that a logical theory $T$ admits
quantifier elimination whenever for any formula
$\Phi(y_1,\ldots,y_k)\equiv \exists x\, \varphi(x,y_1,\ldots,y_k)$
with $\varphi$ being quantifier free there is a computable
quantifier-free formula $\Psi(y_1,\ldots,y_k)$ such that
$\Phi \leftrightarrow \Psi$ is a tautology in $T$. Presburger's
quantifier elimination procedure has non-elementary running time. In
the early 1970s, Cooper~\cite{Coo72} developed an improved version of
Presburger's procedure, which was later shown to run in triply
exponential time~\cite{Oppen78}. Ever since, various other quantifier
elimination procedures have been established and analyzed, especially
for fragments of Presburger arithmetic with a fixed number of
quantifier alternations, see e.g.~\cite{RL78,Wei90}. Weispfenning~\cite{DBLP:conf/issac/Weispfenning97}
analyzed lower bounds for quantifier-elimination procedures and showed
that, assuming unary encoding of numbers, \emph{any} quantifier
elimination procedure requires triply exponential time. In the same
paper, Weispfenning also claims that any algorithm eliminating a single block of
existential quantifiers inherently requires \emph{doubly} exponential
time~\cite[p. 50]{DBLP:conf/issac/Weispfenning97}.

The main contribution of this paper is to develop a quantifier
elimination procedure for Presburger arithmetic that eliminates a
block of existentially quantified variables in \emph{singly}
exponential time. This, of course, contradicts Weispfenning's claim,
which actually turns out to be incorrect as we point out in detail
in \Cref{appendix-weispfenning}. The key technical insight underlying
our procedure is a kind of small model property for parametric integer
programming. Given an integer matrix $A \in \Z^{\ell \times n}$ and
$\bb\in \Z^\ell$, recall that integer programming is to decide whether
 there is some $\bx\in \Z^n$ such that $A \bx \le \bb$. It is
well-known by the work of von zur Gathen and Sieveking~\cite{vzGS78},
and Borosh and Treybig~\cite{BT76}, that if such an $\bx$ exists then
there is one whose bit length is polynomially bounded in the bit
lengths of $A$ and $\bb$. In this paper, we refer to the situation in
which $\bb$ is not fixed and provided as a parameter as \emph{parametric
integer programming}. Our main technical result states that, in this
setting, if $A\bx \le \bb$ has a solution for a given
$\bb \in \Z^\ell$ then there are $D\in \Q^{n\times \ell}$ and
$\bd \in \Q^n$, both of bit length polynomial in the bit length of
$A$, such that $\bx = D\bb + \bd$ is integral and also a
solution. Observe that there is only an exponential number (in the bit
length of $A$) of possible choices for $D$ and $\bd$. Eliminating a
block of variables $\bx$ from a system of linear inequalities thus becomes
easy: we have that $A \bx \le B \by + \bc$ is equivalent to the
disjunction of systems of the form $A(D(B \by + \bc)+\bd) \le B \by + \bc$
for all $D$ and $\bd$ of bit length polynomial in $A$. Using standard
arguments, this approach can then be turned into a quantifier
elimination procedure that eliminates a block of existentially
quantified variables in exponential time.

\section{Preliminaries}

Throughout this paper, all vectors $\bz$ are treated as column vectors unless
mentioned otherwise.  For a vector $\bx\in\Q^n$, let $\norm{\bx}$ be the
maximal absolute value of all components of $\bx$.  Moreover, let
$\ratnorm{\bx}$ be the maximal absolute value of all \emph{numerators} and
\emph{denominators} of components in $\bx$.  The latter is important for
representations: Note that a vector $\bx\in\Q^n$ with $\ratnorm{\bx}\le m$ can
be represented using $O(n\log m)$ bits.  We use analogous notations $\norm{A}$
and $\ratnorm{A}$ for matrices $A$.  We will sometimes refer to the
\emph{Hadamard inequality}~\cite{Had93}, which implies that for a square matrix
$A\in\Z^{n\times n}$, we have $|\det(A)|\le (n\cdot\norm{A})^n$. In particular,
the determinant of $A$ is at most exponential in the maximal absolute value of
entries of $A$.

\subparagraph{Presburger arithmetic} \emph{Presburger arithmetic} (PA) is the first-order theory of the structure $\langle \Z; +, <,0,1\rangle$. In order to enable quantifier elimination, we have to permit modulo constraints. Thus technically, we are working with the structure $\langle \Z;+,<,(\equiv_m)_{m\in\Z},0,1\rangle$, where $a\equiv_m b$ stands for $a\equiv b\bmod{m}$.
In our syntax, we allow atomic formulas of the forms $a_1x_1+\cdots+a_nx_n\le b$ (called \emph{linear inequalities}) or $a_1x_1+\cdots+a_nx_n\equiv b\bmod{m}$ (called \emph{modulo} or \emph{divisibility constraints}), where $x_1,\ldots,x_n$ are variables and $a_1,\ldots,a_n,b,m\in\Z$ are constants encoded in binary. A formula is \emph{quantifier-free} if it contains no quantifiers or, equivalently, is a Boolean combination of atomic formulas. Notice that conjunctions of linear inequalities can be written as systems of linear
inequalities $A\bx \le \bb$.

The \emph{size} of a PA Formula $\varphi$, denoted $|\varphi|$, is the number of letters used to write it down, where we assume all constants to be encoded in binary. (Sometimes, we say that a formula obeys a size bound even if constants are encoded in unary; but this will be stated explicitly).

\subparagraph{Fixed quantifier alternation fragments} The $\Sigma_k$ fragment of PA consists of formulas of the form $\exists \bm{u_1}\forall \bm{u_2}\dots Q_k\bm{u_k}\colon\varphi(\bm{u_1},\bm{u_2},\dots,\bm{u_k},\bz)$ where $\bm{u_i}$ is a vector of quantified variables, $\bz$ is a vector of free variables, $\varphi(\bm{u_1},\bm{u_2},\dots,\bm{u_k},\bz)$ is a quantifier free PA formula, and 
$Q_k$ denotes $\forall$ or $\exists$ depending on whether $k$ is even or odd respectively.
Similarly, the $\Pi_k$ fragment of PA consists of formulas of the form $\forall \bm{u_1}\exists \bm{u_2}\dots Q_k\bm{u_k}\colon\varphi(\bm{u_1},\bm{u_2},\dots,\bm{u_k},\bz)$ where $Q_k$ denotes $\forall$ or $\exists$ depending on whether $k$ is odd or even respectively. %

\subparagraph{Bounded existential Presburger arithmetic}
In addition to our quantifier elimination result, we shall prove a somewhat stronger version, which states that one can compute a compact representation of a quantifier-free formula in polynomial time.
As compact representations, we introduce a syntactic variant of existential Presburger arithmetic, which we call \emph{bounded existential Presburger arithmetic}, short $\BEPA$.
Essentially, \BEPA\ requires all quantifiers to be restricted to bounded intervals, but also permits polynomials over the quantified variables. Using standard methods, one can translate every formula in \BEPA in polynomial time into an \EPA\ formula. However, the converse is not obvious, and our main results states that this is possible. Syntactically, an \BEPA\ formula over free variables $y_1,\ldots,y_m$ is of the form 
\[ \exists^{\le k_1}x_1\cdots\exists^{\le k_n} x_n\colon \varphi, \]
where $x_1,\ldots,x_n$ are variables, each $k_i\in\N$ is a number given in binary, and $\varphi$ is a quantifier-free formula where every atom is of one of the forms: 
\begin{equation} \sum_{i=1}^m p_iy_i\le q\quad\text{or}\quad \sum_{i=1}^m p_iy_i\equiv r\bmod{q}, \label{bepa-atoms}\end{equation}
where $p_1,\ldots,p_m,q,r\in\Z[x_1,\ldots,x_n]$ are polynomials over the variables $x_1,\ldots,x_n$. Thus, where \EPA\ allows constant integral coefficients, \BEPA\ allows polynomials from $\Z[x_1,\ldots,x_n]$. The quantifiers $\exists^{\le k_i}x_i$ are interpreted as ``there exists $x_i\in\Z$ with $|x_i|\le k_i$''.

\begin{remark}
	\label{bepa-to-epa}
	Now indeed, a \BEPA\ formula can be converted in polynomial time into an \EPA\
formula: The bounded quantification is clearly expressible in \EPA.
The terms $p_iy_i$ and $q$ in \eqref{bepa-atoms} (recall $p_i$ and $q$ are polynomials  are from $\Z[x_1,\ldots,x_n]$) are also expressible, because
multiplication with exponentially bounded variables can be expressed using
polynomial-size \EPA\ formulas. This is because given a polynomial
$p\in\Z[x_1,\ldots,x_n]$ and a variable $y$, we can construct a
polynomial-size existential formula $\pi(x_1,\ldots,x_n,y,z)$ expressing
$z=p(x_1,\ldots,x_n)\cdot y \wedge |x_1|\le k_1\wedge\cdots \wedge|x_n|\le k_n$.
This, in turn, follows from the fact that given $\ell$ in unary, we can
construct an existential formula $\mu_\ell(x,y,z)$, of size linear in $\ell$, expressing $z=x\cdot y\wedge
|x|\le 2^\ell$ (see~\cite[Sec.~3.1]{DBLP:conf/csl/Haase14} or \cite[p.~7]{DBLP:conf/lics/HaaseZ19}). Thus, we can construct $\pi$ in \EPA\ by introducing a
variable for each subterm of $p$ (which can clearly all be bounded exponentially).
\end{remark}

\subparagraph*{Making $\BEPA$ formulas quantifier-free}\label{bepa-quantifier-free}
Moreover, a $\BEPA$ formula can easily be converted (in exponential time) into
an exponential-size quantifier-free formula: Just take an exponential
disjunction over all assignments of the existentially bounded variables
$x_1,\ldots,x_n$ and replace the variables by their values in all the atoms.
Thus, $\BEPA$ formulas can be regarded as compact representations of quantifier-free formulas.

\section{Main results}
Here, we state and discuss implications of the main result of this paper:
\begin{theorem}\label{bounded-existential}
	Given a formula of $\EPA$, we can construct in
	polynomial time an equivalent formula in $\BEPA$.
\end{theorem}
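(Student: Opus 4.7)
The plan is to use the parametric integer programming small model property advertised in the introduction as a black box, and to package its conclusion as a $\BEPA$ formula. Given $\Phi(\by) \equiv \exists \bx\, \varphi(\bx,\by)$ in $\EPA$, I would first normalize $\varphi$ so that it consists only of linear (in)equality atoms, by moving each divisibility atom $t \equiv r \bmod m$ into the linear world: replace it by $t - r - m z = 0$ for a fresh existentially quantified integer variable $z$ folded into the $\bx$-vector. After this step the body of $\Phi$ is a Boolean combination of linear (in)equalities over an enlarged existential tuple, and the free variables $\by$ appear only affinely on right-hand sides.

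Next I would invoke the small model property: whenever $\Phi(\by_0)$ holds, there is a witness of the form $\bx_0 = D\bb_0 + \bd$, where $\bb_0 = B\by_0 + \bc$ is the vector of right-hand-side expressions of the atoms of $\varphi$, and where $D \in \Q^{n\times\ell}$ and $\bd \in \Q^n$ have bit length polynomial in $|\Phi|$. The introduction states this for a single conjunctive system $A\bx \le \bb$; applied to whichever conjunction of atoms of $\varphi$ is actually satisfied by $\bx_0$, it produces such a pair $(D,\bd)$. The decisive observation is that the range of admissible pairs $(D,\bd)$ is a single pool of size $2^{p(|\Phi|)}$ (for some fixed polynomial $p$) that depends only on the coefficient matrices appearing in $\varphi$, and not on which disjunct is satisfied. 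Therefore $\Phi(\by)$ is equivalent to ``there exist polynomially-bit-bounded $D,\bd$ such that $\varphi(D\bb+\bd,\by)$ holds''.

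This equivalent formula is directly expressible in $\BEPA$. The bounded existential variables $x_1,\ldots,x_n$ of the $\BEPA$ formula are the numerators and denominators of the entries of $D$ and $\bd$, each bounded by a common $2^{p(|\Phi|)}$ whose binary representation has polynomial size. The body is the Boolean combination obtained by substituting $\bx = D\bb+\bd$ into $\varphi$. Every resulting atom takes the shape $\sum_i p_i\, y_i \le q$ or $\sum_i p_i\, y_i \equiv r \bmod m$ for polynomials $p_i,q,r \in \Z[x_1,\ldots,x_n]$ -- in fact bilinear expressions in the entries of $D,\bd$ combined with the integer constants from the original atom -- matching the atoms allowed in~\eqref{bepa-atoms}. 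Integrality of $D\bb+\bd$ is enforced by additional divisibility atoms that link the denominators of $D$ with the entries of $\bb$ (after clearing denominators). The total size of the construction is polynomial in $|\Phi|$.

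The main obstacle will be to make the invocation of the small model property uniform across the Boolean structure of $\varphi$ without enumerating an exponentially large DNF, and to simultaneously accommodate negations that yield strict inequalities and the integrality constraint on $\bx$. I expect this to require a slightly generalized version of the small model theorem whose admissible parameter pool $(D,\bd)$ is a fixed set determined only by the coefficient matrices occurring in $\varphi$; once this is in hand, every satisfied disjunct of $\varphi$ draws its witness from this common pool, and the reduction above goes through in polynomial time.
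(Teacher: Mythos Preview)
Your proposal is correct and follows essentially the same route as the paper: eliminate divisibility atoms via fresh existential variables, normalize to a positive Boolean combination of inequalities, collect all atoms into one matrix $A$, and then bound-quantify over the numerators and denominators of $D$ and $\bd$ from \cref{solution-affine-transformation}, substituting $D(B\by+\bc)+\bd$ for $\bx$ and adding integrality constraints. The paper packages the body slightly differently---it introduces Boolean selector variables $z_i\in\{0,1\}$, one per atom, and writes $\varphi'\wedge\bigwedge_i(z_i{=}1\to\psi_i)$ where $\psi_i$ is the $i$-th substituted atom---but your direct substitution into $\varphi$ achieves the same effect.

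Your worry in the last paragraph is the right one to flag, but no ``slightly generalized'' small-model theorem is needed: \cref{solution-affine-transformation} already gives the uniformity you want. Its bounds $\ratnorm{D}\le\Delta$ and $\ratnorm{\bd}\le n\Delta^2$ depend only on the subdeterminant bound $\Delta$ of the \emph{full} matrix $A$ formed from all atoms, and every subdeterminant of any row-submatrix $A'$ is already a subdeterminant of $A$. Hence the same $\Delta$---and the same finite pool of pairs $(D,\bd)$---works for whichever subset of atoms a witness happens to satisfy. The paper makes the bookkeeping explicit by padding the $D'$ obtained for the satisfied subsystem with zero columns so that $D(B\by+\bc)=D'(B'\by+\bc')$; once you note this, your plan goes through exactly as written.
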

From \cref{bounded-existential}, we can deduce the following, since by the remark \cref{bepa-to-epa} in \cref{bepa-quantifier-free}, one can easily convert a \BEPA\ formula
into an exponential-sized quantifier-free formula.
\begin{corollary}\label{quantifier-elimination}
	Given a formula $\varphi$ in existential Presburger arithmetic, we can compute
	in exponential time an equivalent quantifier-free formula $\psi$ of size
	exponential in $\varphi$. Moreover, all constants in $\psi$ are encoded in
	unary.
\end{corollary}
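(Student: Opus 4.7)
The plan is to apply \cref{bounded-existential} to obtain, in polynomial time, an equivalent \BEPA\ formula $\varphi'$ of size polynomial in $|\varphi|$, and then explicitly expand its bounded quantifiers into a disjunction over all valid assignments, exactly as outlined in the paragraph \emph{Making $\BEPA$ formulas quantifier-free}. Writing $\varphi' = \exists^{\le k_1}x_1\cdots \exists^{\le k_n}x_n\colon\varphi''$, the resulting quantifier-free formula is
\[
  \psi \;\equiv\; \bigvee_{|a_1|\le k_1,\ldots,|a_n|\le k_n} \varphi''[x_1/a_1,\ldots,x_n/a_n],
\]
where every atom of $\varphi''$, being of the form $\sum_i p_i y_i \le q$ or $\sum_i p_i y_i \equiv r \bmod q$ with $p_i,q,r\in\Z[x_1,\ldots,x_n]$, becomes after substitution a linear inequality or modular constraint with concrete integer coefficients.

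The work then reduces to two quantitative claims: (i)~the disjunction has at most exponentially many disjuncts, and (ii)~each disjunct, including its integer constants written in unary, has size exponential in $|\varphi|$. For (i), since each $k_i$ is encoded in binary and therefore bounded by $2^{|\varphi'|}$, the total number of assignments is at most $(2\cdot 2^{|\varphi'|}+1)^n = 2^{O(n|\varphi'|)} = 2^{\mathrm{poly}(|\varphi|)}$. For (ii), every polynomial $p\in\Z[x_1,\ldots,x_n]$ appearing in $\varphi''$ has syntactic size at most $|\varphi'|$, so its degree, number of monomials, and coefficient magnitudes are all bounded by $2^{O(|\varphi'|)}$; evaluating such a polynomial at an integer point $(a_1,\ldots,a_n)$ with $|a_i|\le 2^{|\varphi'|}$ yields an integer of magnitude at most $2^{O(|\varphi'|^2)} = 2^{\mathrm{poly}(|\varphi|)}$, whose unary representation occupies space exponential in $|\varphi|$.

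Combining these bounds, the expanded formula $\psi$ has size exponential in $|\varphi|$ and is produced by a straightforward enumeration procedure running in exponential time (the cost is dominated by writing down the output). The only real subtlety, and the one place where one must be careful, is step (ii): although the bounded variables $x_i$ range over exponentially large integers, a polynomial of syntactic size $s$ can evaluate at such points to an integer of magnitude at most $2^{O(s^2)}$ rather than something doubly exponential, and it is precisely this ``only singly exponential'' evaluation bound that permits the constants of $\psi$ to be written in unary within the stated size budget.
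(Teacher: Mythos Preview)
Your proof is correct and matches the paper's primary derivation: the paper states \cref{quantifier-elimination} as an immediate consequence of \cref{bounded-existential} together with the observation in the paragraph \emph{Making $\BEPA$ formulas quantifier-free}, which is exactly what you carry out (with the quantitative bounds made explicit). Note that in Section~4 the paper additionally supplies a second, self-contained proof of \cref{quantifier-elimination} directly from \cref{solution-affine-transformation}---bypassing \BEPA\ and instead converting to DNF and enumerating the exponentially many pairs $(D,\bd)$---presented as a warm-up for the proof of \cref{bounded-existential}; this alternative route is not needed for correctness but gives a more concrete view of the resulting quantifier-free formula.
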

In \cref{exponential-lower-bound}, we will see that an exponential blowup
cannot be avoided when eliminating a block of existential quantifiers,
even if we allow constants to be encoded in binary in the quantifier-free formula.

There are several applications of
\cref{bounded-existential,quantifier-elimination}. The most obvious type of
applications are those, where, for every problem\footnote{To be precise: Every
	\emph{semantic} problem, meaning one that only depends on the set defined by
	the input formula. } that is in $\NP$ (resp.\ $\coNP$) for quantifier-free
formulas, the same problem belongs to $\NEXP$ (resp.\ $\coNEXP$) for
existential formulas. Oftentimes, this yields optimal complexity. We mention
some examples.

A direct consequence of \cref{quantifier-elimination} (and the $\NP$ membership of the quantifier free fragment of PA) is the following.
\begin{corollary}\label{nexp-sigma2}
	The $\Sigma_2$-fragment of Presburger arithmetic belongs to $\NEXP$.
\end{corollary}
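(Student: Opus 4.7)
The plan is to reduce a $\Sigma_2$-sentence to an exponentially larger existential formula and then invoke the $\NP$ upper bound for existential Presburger arithmetic. Concretely, consider a sentence
\[ \Phi \;\equiv\; \exists \bm{u_1}\,\forall \bm{u_2}\colon \varphi(\bm{u_1},\bm{u_2}), \]
with $\varphi$ quantifier-free. Rewriting the universal block as the negation of an existential one, we have
\[ \Phi \;\leftrightarrow\; \exists \bm{u_1}\colon \neg\bigl(\exists \bm{u_2}\colon \neg\varphi(\bm{u_1},\bm{u_2})\bigr). \]
The inner formula $\exists \bm{u_2}\colon \neg\varphi(\bm{u_1},\bm{u_2})$ is an existential Presburger formula with the $\bm{u_1}$ treated as free variables, of size polynomial in $|\Phi|$.

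By \Cref{quantifier-elimination}, I would compute in time $2^{O(|\Phi|)}$ an equivalent quantifier-free formula $\psi(\bm{u_1})$ of size exponential in $|\Phi|$ whose constants are in unary. Negating $\psi$ and prepending $\exists \bm{u_1}$ yields an existential Presburger sentence $\Phi'\equiv \exists \bm{u_1}\colon \neg\psi(\bm{u_1})$ that is equivalent to $\Phi$, has size exponential in $|\Phi|$, and uses unary constants.

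Finally, I invoke the classical fact that satisfiability of an existential Presburger formula is in $\NP$ in the size of the formula, even when constants are given in unary (which is only easier than binary). Since the size of $\Phi'$ is exponential in $|\Phi|$, satisfiability of $\Phi'$ can be decided in $\NEXP$ in $|\Phi|$: one nondeterministically guesses a witness for $\bm{u_1}$ whose bit length is polynomial in $|\Phi'|$, hence exponential in $|\Phi|$, and verifies in time polynomial in $|\Phi'|$ (i.e., exponential in $|\Phi|$) that it satisfies the quantifier-free matrix.

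The only subtle point, and the one I would be most careful to check, is that \Cref{quantifier-elimination} really is applicable with $\bm{u_1}$ as \emph{free} variables rather than merely as a sentence check, and that the resulting formula $\psi(\bm{u_1})$ together with its unary constants can be handled by the $\NP$ algorithm for existential PA. Both are standard: the corollary is stated for formulas (not just sentences), and the classical small-model property for existential PA is insensitive to whether constants are encoded in unary or binary, so the $\NEXP$ upper bound follows.
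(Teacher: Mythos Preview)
Your proof is correct and follows essentially the same approach as the paper: apply \Cref{quantifier-elimination} to the inner block (with $\bm{u_1}$ free) to obtain an exponential-size quantifier-free formula, then use the $\NP$ bound for the resulting existential sentence. The paper states this in a single line, invoking \Cref{quantifier-elimination} together with the $\NP$ membership of the quantifier-free fragment; your version spells out the same reduction in more detail.
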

The $\NEXP$ upper bound is known and was shown by Haase~\cite[Thm.~1]{DBLP:conf/csl/Haase14}. In fact, the $\Sigma_2$-fragment is known to be
$\NEXP$-complete: An $\NEXP$ lower bound was shown much earlier by
Gr\"{a}del~\cite{DBLP:journals/apal/Gradel89}, already for the
$\exists\forall^*$-fragment.

\newcommand{\ram}{\exists^{\mathsf{ram}}}
\subparagraph{Ramsey quantifiers}
In fact, combining \cref{quantifier-elimination} with the results from \cite{DBLP:journals/pacmpl/BergstrasserGLZ24}, we can strengthen \cref{nexp-sigma2}. The \emph{Ramsey quantifier} $\ram$ states the existence of infinite (directed) cliques. More precisely, if $\varphi(\bx,\by,\bz)$ is a Presburger formula where $\bx$ and $\by$ are vectors of $n$ variables each, then $\ram (\bx,\by)\colon\varphi(\bx,\by,\bz)$ is satisfied for $\bz$ if and only if there exists an infinite sequence $\ba_1,\ba_2,\ldots\in\Z^n$ of pairwise distinct vectors with $\varphi(\ba_i,\ba_j,\bz)$ for every $i<j$. As mentioned in \cite{DBLP:journals/pacmpl/BergstrasserGLZ24}, Ramsey quantifiers can be applied to deciding liveness properties, deciding monadic decomposability (see below), and deciding whether a formula defines a well-quasi-ordering (see below).

In \cite[Thm.~5.1]{DBLP:journals/pacmpl/BergstrasserGLZ24}, it is shown that if $\varphi(\bx,\by,\bz)$ is an \EPA\ formula, then one can compute in polynomial time an \EPA\ formula $\varphi'(\bz)$ equivalent to $\ram (\bx,\by,\bz)\colon \varphi(\bx,\by,\bz)$. 
\begin{corollary}\label{ramsey}
	Given a $\Sigma_2$-formula $\varphi(\bx,\by,\bz)$, we can construct an exponential-size \EPA\ formula equivalent to $\ram (\bx,\by)\colon\varphi(\bx,\by,\bz)$. In particular, deciding the truth of $\ram (\bx,\by)\colon\psi$ for
	$\Sigma_2$-formulas $\psi(\bx,\by)$ is $\NEXP$-complete.
\end{corollary}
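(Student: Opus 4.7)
The plan is to combine Corollary~\ref{quantifier-elimination} with Theorem~5.1 of \cite{DBLP:journals/pacmpl/BergstrasserGLZ24}, which eliminates the Ramsey quantifier in polynomial time on \EPA\ inputs. A $\Sigma_2$-formula has the shape $\varphi(\bx,\by,\bz)\equiv \exists\bu\forall\bv\colon \psi(\bu,\bv,\bx,\by,\bz)$ with $\psi$ quantifier-free, so the first step is to turn $\varphi$ into an equivalent \EPA-formula. To do so, I would apply Corollary~\ref{quantifier-elimination} to the existential formula $\exists\bv\colon \neg\psi$, producing in exponential time a quantifier-free formula $\chi(\bu,\bx,\by,\bz)$ of exponential size. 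Then $\forall\bv\colon\psi \equiv \neg\chi$, and hence $\varphi \equiv \exists\bu\colon \neg\chi(\bu,\bx,\by,\bz)$, which is an \EPA-formula of size exponential in $|\varphi|$.

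Next, I would feed this exponential-size \EPA-formula into \cite[Thm.~5.1]{DBLP:journals/pacmpl/BergstrasserGLZ24}. Since that procedure runs in polynomial time in its input, when applied to a formula of exponential size in $|\varphi|$, it outputs an \EPA-formula $\varphi'(\bz)$ equivalent to $\ram(\bx,\by)\colon\varphi(\bx,\by,\bz)$ whose size is still exponential in $|\varphi|$. This yields the first assertion of the corollary.

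For the complexity statement, consider a $\Sigma_2$-formula $\psi(\bx,\by)$ (no $\bz$). The construction above then produces an exponential-size \EPA\ sentence equivalent to $\ram(\bx,\by)\colon\psi$. Since the truth problem of existential Presburger arithmetic is in $\NP$, guessing the (exponential) witnesses and verifying them gives an $\NEXP$ upper bound. For the matching lower bound, I would reduce from truth of closed $\Sigma_2$-sentences, which is $\NEXP$-hard already for the $\exists\forall^*$-fragment by Gr\"adel's theorem~\cite{DBLP:journals/apal/Gradel89}: given such a sentence $\sigma$, the formula $\ram(x,y)\colon (\sigma \wedge x < y)$ holds iff $\sigma$ does (the sequence $0,1,2,\ldots$ is an infinite clique when $\sigma$ is true, and no two elements form an edge when $\sigma$ is false).

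The only real point to double-check is that the exponential blowup is absorbed correctly at each step: the quantifier-elimination step is the sole source of blowup, and since the Ramsey elimination is polynomial-time, composing the two keeps the final formula exponential rather than doubly exponential. This is exactly what makes the resulting complexity $\NEXP$ tight and lifts Corollary~\ref{nexp-sigma2} to the setting with one Ramsey quantifier.
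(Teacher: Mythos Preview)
Your proposal is correct and follows essentially the same approach as the paper: convert the $\Sigma_2$-formula into an exponential-size \EPA\ formula via \cref{quantifier-elimination} (by eliminating the inner universal block), then apply \cite[Thm.~5.1]{DBLP:journals/pacmpl/BergstrasserGLZ24}, and for the lower bound reduce from $\Sigma_2$-truth using $\ram(\bx,\by)\colon\chi\wedge\bx<\by$. Your write-up is in fact more explicit than the paper's about how the $\Sigma_2\to\EPA$ conversion works.
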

Indeed, \cref{quantifier-elimination} lets us convert $\varphi(\bx,\by,\bz)$ into an exponential-size existential formula $\varphi'$, so that we can apply the above result of \cite{DBLP:journals/pacmpl/BergstrasserGLZ24} to the formula $\ram(\bx,\by)\colon \varphi'(\bx,\by,\bz)$, which results in an equivalent exponential \EPA\ formula. The $\NEXP$ lower bound in the second statement follows from $\NEXP$-hardness of the $\Sigma_2$-fragment and the fact that for a given $\Sigma_2$-formula $\chi$ without free variables, the statement $\ram(\bx,\by)\colon \chi\wedge \bx<\by$ is equivalent to $\chi$.

\subparagraph{Detecting WQOs}
A \emph{well-quasi-ordering} (WQO) is a reflexive and transitive ordering $(X,\le)$ such that for every sequence $x_1,x_2,\ldots\in X$, there are $i<j$ with $x_i\le x_j$.
Well-quasi-orderings are of paramount importance in the widely applied theory of well-structured transition systems~\cite{finkel87,abdulla00,finkel01}.
The problem of deciding whether a given Presburger formula $\varphi(\bx,\by)$ defines a WQO was recently raised by Finkel and Gupta~\cite{DBLP:conf/fsttcs/FinkelG19}, with the hope of establishing automatically that certain systems are well-structured. As observed in~\cite[Prop. 12]{DBLP:journals/corr/abs-1910-02736}, this problem reduces to evaluating Ramsey quantifiers, which is decidable by~\cite{DBLP:journals/bsl/Rubin08}.
Based on an $\NP$ algorithm for Ramsey quantifiers, it is shown in \cite[Sec.~8.3]{DBLP:journals/pacmpl/BergstrasserGLZ24} that given a quantifier-free formula $\varphi(\bx,\by)$, where $\bx$ and $\by$ are vectors of $n$ variables each, it is $\coNP$-complete whether the relation $R\subseteq\Z^n\times\Z^n$ defined by $\varphi$ is a WQO. Our results allow us to settle the complexity for existential formulas:
\begin{corollary}\label{detecting-wqos}
	Given an \EPA\ formula $\varphi$, it is
	$\coNEXP$-complete to decide whether $\varphi$ defines a
	WQO.
\end{corollary}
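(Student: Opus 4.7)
The claim asserts a matching coNEXP upper bound and coNEXP-hardness.

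For the upper bound, the plan is to combine \cref{quantifier-elimination} with the existing coNP algorithm for quantifier-free formulas. I would first apply \cref{quantifier-elimination} to convert the input \EPA\ formula $\varphi$ in exponential time into an equivalent quantifier-free PA formula $\psi$ of exponential size (with constants in unary). Then I would run the coNP procedure of~\cite[Sec.~8.3]{DBLP:journals/pacmpl/BergstrasserGLZ24} on $\psi$, which decides in coNP whether a quantifier-free formula defines a WQO. Since that algorithm is polynomial in its input and $|\psi|$ is exponential in $|\varphi|$, the overall procedure runs in coNEXP.

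For coNEXP-hardness, I would aim to lift the coNP-hardness reduction of~\cite{DBLP:journals/pacmpl/BergstrasserGLZ24} from the quantifier-free setting. Their construction translates a polynomial-size instance of some coNP-hard problem into a polynomial-size quantifier-free PA formula whose defined relation is a WQO iff the original instance is a YES-instance. The plan is to apply the same style of construction to an exponentially-sized coNEXP witness that is succinctly encoded by an \EPA\ formula, using the bounded-multiplication gadgets described in \cref{bepa-to-epa} so that the resulting \EPA\ formula remains of polynomial size. An alternative route is a direct reduction from the coNEXP-complete $\Pi_2$-fragment (the complement of the $\Sigma_2$-fragment, whose NEXP-hardness is invoked in \cref{nexp-sigma2}): given a $\Pi_2$-sentence $\forall \bu\,\exists \bv\,\phi(\bu,\bv)$, construct an \EPA\ formula $\varphi(\bx,\by)$ whose infinite bad sequences encode counterexamples $\bu$ to the outer universal quantifier, so that $\varphi$ is a WQO iff the $\Pi_2$-sentence holds.

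The upper bound is essentially bookkeeping. The main obstacle is the lower bound: designing an \EPA\ construction whose WQO status is tightly controlled by an exponential-size universal witness structure. Either route---lifting the~\cite{DBLP:journals/pacmpl/BergstrasserGLZ24} gadgets to succinct inputs, or a fresh encoding from $\Pi_2$-PA---requires care to ensure that the polynomial-size \EPA\ formula correctly captures the non-triviality of a bad sequence at exponential scale, and in particular that accidental short bad sequences do not collapse the reduction.
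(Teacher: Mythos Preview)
Your upper bound is correct and matches the paper exactly: apply \cref{quantifier-elimination}, then run the $\coNP$ procedure of~\cite[Sec.~8.3]{DBLP:journals/pacmpl/BergstrasserGLZ24} on the exponential-size quantifier-free formula.

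For the lower bound, your route~2 (reduce from the $\Pi_2$-fragment) is precisely what the paper does, and your route~1 is an unnecessary detour. However, you are making the reduction sound harder than it is. No bounded-multiplication gadgets are needed, and the worry about ``accidental short bad sequences'' does not arise, because the paper's construction arranges that when the $\Pi_2$-sentence fails, the defined relation is not even a quasi-order (transitivity breaks), rather than merely admitting a bad sequence. Concretely, given $\gamma=\forall\by\,\exists\bx\,\psi(\bx,\by)$ with $\Gamma(\by)\coloneqq\exists\bx\,\psi(\bx,\by)$, the paper adds a single integer tag to each side and sets
\begin{align*}
\varphi((x,\bx),(y,\by))\coloneqq{} &(x<0\wedge y<0)\vee(x>0\wedge y>0)\vee(x<0\wedge y=0)\\
&\vee(x=0\wedge y>0)\vee(x=0\wedge y=0)\vee(x<0\wedge y>0\wedge\Gamma(\by)).
\end{align*}
The existential quantifier in $\Gamma$ moves to the front, so $\varphi$ is \EPA. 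If $\gamma$ holds, then any two vectors are related via the sign of their tag, so $\varphi$ defines a total preorder and trivially a WQO. If $\gamma$ fails at some $\bw$, then $\varphi((-1,\bw),(0,\bw))$ and $\varphi((0,\bw),(1,\bw))$ hold but $\varphi((-1,\bw),(1,\bw))$ does not, so transitivity fails. Thus the delicate control over bad sequences you anticipated is sidestepped entirely.
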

The upper bound follows directly from \cref{quantifier-elimination} and the fact that it is $\coNP$-complete to decide whether a given quantifier-free formula defines a well-quasi-ordering~\cite[Sec.~8.3]{DBLP:journals/pacmpl/BergstrasserGLZ24}. This yields a $\coNEXP$ procedure overall. It should be noted that \cref{detecting-wqos} can also be deduced from \cref{ramsey} (using the same idea as in \cite[Sec.~8.3]{DBLP:journals/pacmpl/BergstrasserGLZ24}). However, we find it instructive to demonstrate how quantifier elimination permits a direct transfer of the $\coNP$ algorithm as a black box. We show the $\coNEXP$ lower bound in \cref{lower-bounds}.

\subparagraph*{Monadic decomposability}
A Presburger formula is \emph{monadic} if each of its atoms contains at most one variable.  Moreover, we say that a Presburger formula $\varphi$ is \emph{monadically decomposable} if $\varphi$ is equivalent to a monadic Presburger formula. Motivated by the role monadic formulas play in constraint databases~\cite{GRS01,CDB-book}, Veanes, Bj{\o}rner, and Nachmanson, and Bereg recently raised the question of how to decide whether a given formula is monadically decomposable~\cite{DBLP:journals/jacm/VeanesBNB17}. For Presburger arithmetic, decidability follows from \cite[p.~1048]{ginsburg1966bounded} and for quantifier-free formulas, monadic decomposability was shown $\coNP$-complete in \cite[Thm.~1]{DBLP:conf/cade/HagueLRW20} (in~\cite[Cor.~8.1]{DBLP:journals/pacmpl/BergstrasserGLZ24}, the $\coNP$ upper bound is shown via Ramsey quantifiers). \cref{quantifier-elimination} allows us to settle the case of \EPA\ formulas.
\begin{corollary}\label{monadic-decomposability-epa}
	Monadic decomposability of \EPA\ formulas is $\coNEXP$-complete.
\end{corollary}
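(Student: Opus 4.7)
The argument splits into an upper bound and a matching lower bound.

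For the $\coNEXP$ upper bound, the plan is a direct application of \cref{quantifier-elimination}. Given an \EPA\ formula $\varphi$, I first construct in exponential time an equivalent quantifier-free PA formula $\psi$ of size exponential in $|\varphi|$, with constants in unary. I then invoke the $\coNP$ decision procedure for monadic decomposability of quantifier-free PA formulas from \cite[Thm.~1]{DBLP:conf/cade/HagueLRW20} (equivalently, the Ramsey-quantifier-based procedure of \cite[Cor.~8.1]{DBLP:journals/pacmpl/BergstrasserGLZ24}) on $\psi$. Running a $\coNP$ algorithm on an exponential-size input yields a $\coNEXP$ procedure overall. The unary encoding promised by \cref{quantifier-elimination} is crucial here: it ensures the quantifier-free algorithm really does run in $\coNP$ in the size of $\psi$, with no hidden blow-up from binary-encoded constants.

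For the matching $\coNEXP$-hard lower bound, the plan is to leverage the exponential succinctness that \EPA\ enjoys over quantifier-free PA in order to lift the known $\coNP$-hardness of the quantifier-free case~\cite[Thm.~1]{DBLP:conf/cade/HagueLRW20} to $\coNEXP$-hardness. Concretely, one starts from a $\coNEXP$-complete source problem such as the complement of succinct 3SAT, where the exponentially many clauses are described by a polynomial-size Boolean circuit. The gadget used in the quantifier-free reduction---whose explicit-clause structure encodes 3SAT as non-monadic-decomposability---can be driven from this implicit description by replacing the explicit enumeration of clauses with polynomial-size existential \EPA\ quantifiers that simulate the circuit, using the bounded-multiplication trick recalled in \cref{bepa-to-epa}. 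An alternative route is to transfer the $\coNEXP$-hardness of WQO-detection established in \cref{detecting-wqos} via the tight connection between Ramsey quantifiers and monadic decomposability exploited in \cite[Cor.~8.1]{DBLP:journals/pacmpl/BergstrasserGLZ24}.

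The main obstacle lies on the lower-bound side: the existential variables introduced for succinct encoding must not spuriously alter the decomposability status of the formula. The indecomposability witness in the quantifier-free reduction typically consists of a pair of assignments to the free variables that agree on one coordinate but disagree on formula membership; under succinct encoding, the corresponding clause structure exists only implicitly through quantifier witnesses, so one must verify that a genuine witness pair can still be extracted. I would address this by arranging each existential variable to be uniquely determined by (and polynomially bounded in terms of) the free variables via divisibility constraints, so that every free assignment pins down the entire extended assignment and the hardness structure of the succinct source problem is preserved verbatim.
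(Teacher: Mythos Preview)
Your upper bound argument matches the paper's exactly.

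For the lower bound, the paper takes a much simpler route than either of your proposals. It reduces directly from the $\Pi_2$-fragment of PA, which is $\coNEXP$-hard as the dual of \cref{nexp-sigma2}. Given $\varphi=\forall\bx\exists\by\colon\psi(\bx,\by)$, the paper forms the \EPA\ formula $\kappa(\bx,z_1,z_2)\coloneqq\exists\by\colon\psi(\bx,\by)\vee z_1=z_2$ and argues that $\kappa$ is monadically decomposable iff $\varphi$ holds: if $\varphi$ holds then $\kappa$ defines all of $\Z^{n+2}$ and is trivially decomposable; if not, pick $\ba$ with $\neg\exists\by\colon\psi(\ba,\by)$, and then $\kappa\wedge\bx=\ba$ is equivalent to $z_1=z_2$, which is not monadically decomposable, so neither is $\kappa$. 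No circuit simulation or uniqueness gadgets for quantified variables are needed.

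Your succinct-3SAT plan could probably be pushed through---the paper even remarks in a footnote that adapting the reduction of \cite[Lem.~2]{DBLP:conf/cade/HagueLRW20} would not be hard---but it is far more elaborate than necessary, and as you yourself note, the interaction between the new existential variables and decomposability requires nontrivial care. Your alternative route via WQO-detection, however, has a real gap: the connection in \cite[Cor.~8.1]{DBLP:journals/pacmpl/BergstrasserGLZ24} is a reduction \emph{from} monadic decomposability \emph{to} Ramsey-quantifier evaluation, not the converse, so hardness of WQO-detection does not transfer to monadic decomposability along that path.

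A minor point: your claim that the unary-encoding clause of \cref{quantifier-elimination} is ``crucial'' is overstated. The $\coNP$ procedures in \cite{DBLP:conf/cade/HagueLRW20,DBLP:journals/pacmpl/BergstrasserGLZ24} already handle binary-encoded constants, so an exponential-size quantifier-free formula with binary constants would suffice for the upper bound.
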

This is because given an \EPA\ formula, we can compute an exponential-sized quantifier-free formula and apply the existing $\coNP$ procedure, yielding a $\coNEXP$ upper bound overall. Again, the $\coNEXP$ upper bound could also be deduced from \cref{ramsey} (but this proof shows again how to transfer algorithms using quantifier elimination). The $\coNEXP$ lower bound follows the same idea as the $\coNP$ lower bound in \cite{DBLP:journals/pacmpl/BergstrasserGLZ24}, see \cref{lower-bounds}.

\newcommand{\fp}{\mathfrak{p}}
\newcommand{\QF}{\mathrm{QF}}
\subparagraph*{$\NP$ upper bounds}
In addition to new $\NEXP$ and $\coNEXP$ upper bounds, \cref{bounded-existential} can also be used to obtain $\NP$ upper bounds. Suppose we have a predicate $\fp$ on sets of integral vectors. That is, for each $S\subseteq\Z^m$ for some $m\in\N$, either $\fp(S)$ is true or not. We call this predicate \emph{admissible} if for any $m\in\N$, $S_1,S_2\subseteq\Z^m$, we have that $\fp(S_1\cup S_2)$ implies $\fp(S_1)$ or $\fp(S_2)$. Let us see some examples:
\begin{enumerate}[(i)]
	\item\label{fp-nonempty} The predicate $\fp$ with $\fp(S)$ if and only if $S\ne\emptyset$.
	\item\label{fp-infinite} The predicate $\fp$ with $\fp(S)$ if and only if $S$ is infinite.
	\item\label{fp-powers} The predicate $\fp$ with $\fp(S)$ if and only if $S\subseteq\Z$ and $S$ contains a power of $2$.
	\item\label{fp-ramsey} The predicate $\fp$ with $\fp(S)$ if and only if $S\subseteq\Z^{2k}$ and viewed as a relation $S\subseteq\Z^k\times\Z^k$, $S$ has an infinite clique.
	\item The predicate $\fp$ with $\fp(S)$ if and only if $S\subseteq\Z$ and $S$ contains infinitely many primes.
	\item The predicate $\fp$ with $\fp(S)$ if and only if $S\subseteq\Z^2$ and $S$ contains a pair $(x,2^x)$.
\end{enumerate}
For each such predicate, we consider the problem $\fp(\EPA)$:
\begin{description}
	\item[Input] An $\EPA$ formula $\varphi$ with $m$ free variables for some $m\in\N$.
	\item[Question] Does $\fp(S)$ hold, where $S\subseteq\Z^m$ is the set defined by $\varphi$?
\end{description}
Moreover, $\fp(\QF)$ is the restriction of the problem where the input formula $\varphi$ is quantifier-free.

For several of the examples above, it is known that $\fp(\EPA)$ is in $\NP$: For (\labelcref{fp-nonempty}) and (\labelcref{fp-infinite}), these are standard facts, and for (\labelcref{fp-powers}), this follows from $\NP$-completeness of existential Büchi arithmetic~\cite[Thm. 1]{DBLP:conf/lics/GuepinH019}. For (\labelcref{fp-ramsey}), this follows from the fact that Ramsey quantifiers can be evaluated in $\NP$~\cite[Thm~5.1]{DBLP:journals/pacmpl/BergstrasserGLZ24}. Our results imply that for proving $\NP$ upper bounds, we may always assume a quantifier-free input formula. This is perhaps surprising, because one might expect that for non-linear predicates, it is difficult to bound the quantified variables.
\begin{corollary}
	For every admissible predicate $\fp$, the problem $\fp(\EPA)$ is in $\NP$ if and only if $\fp(\QF)$ is in $\NP$.
\end{corollary}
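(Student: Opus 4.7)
The backward direction is immediate, since quantifier-free formulas are a special case of \EPA\ formulas. For the forward direction, assume $\fp(\QF)\in\NP$; the plan is to reduce $\fp(\EPA)$ to one call of this $\NP$-algorithm after a single polynomial-size nondeterministic guess. Given an \EPA\ input $\varphi$ with $m$ free variables defining $S \subseteq \Z^m$, first apply \cref{bounded-existential} to produce, in polynomial time, an equivalent \BEPA\ formula
\[
  \exists^{\le k_1} x_1 \cdots \exists^{\le k_n} x_n \colon \psi(\bx, \by).
\]
The nondeterministic procedure then guesses $\ba=(a_1,\dots,a_n)\in\Z^n$ with $|a_i|\le k_i$, forms the quantifier-free substitute $\psi_{\ba}(\by) := \psi[\bx:=\ba]$, and invokes the hypothesized $\NP$-algorithm for $\fp(\QF)$ on $\psi_{\ba}$.

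To argue correctness, let $S_{\ba}\subseteq\Z^m$ be the set defined by $\psi_{\ba}$; the semantics of \BEPA\ gives $S=\bigcup_{\ba} S_{\ba}$, where $\ba$ ranges over the allowed assignments. Iterating admissibility on this finite union yields $\fp(S)\Rightarrow \exists\ba\colon \fp(S_{\ba})$. The converse implication $\fp(S_{\ba})\Rightarrow \fp(S)$---monotonicity under supersets---is not spelled out in the paper's definition of admissibility but holds for each of the listed examples and appears to be the intended reading needed to make the guess-and-check sound. Under this reading, the resulting equivalence $\fp(S)\Leftrightarrow\exists\ba\colon\fp(S_{\ba})$ is exactly what the nondeterministic reduction certifies.

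For the $\NP$ bound, each $a_i$ has bit length $O(\log k_i)$, which is polynomial in the input since $k_i$ is encoded in binary in a \BEPA\ formula of polynomial size. Substituting $\ba$ into the polynomial coefficients $p_i,q,r\in\Z[x_1,\dots,x_n]$ appearing in atoms of the form \eqref{bepa-atoms} produces integer constants; under the natural sum-of-monomials representation each such polynomial has polynomially bounded degree with polynomially many monomials of polynomial-bit coefficients, so the resulting integers have polynomial bit length and $\psi_{\ba}$ is a polynomial-size quantifier-free formula. I expect the main obstacle to be conceptual rather than technical and it is precisely the surprise highlighted in the paper: for predicates like ``$S$ contains a power of $2$'' one might doubt that existential witnesses can be bounded at all, since the predicate itself is not PA-definable. \cref{bounded-existential} dissolves this concern because it bounds only the quantified variables of $\varphi$, entirely independently of the predicate $\fp$ under consideration.
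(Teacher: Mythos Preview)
Your argument follows the paper's proof exactly: convert to \BEPA\ via \cref{bounded-existential}, guess a bounded assignment $\ba$, substitute, and run the $\NP$ procedure for $\fp(\QF)$ on the resulting quantifier-free $\psi_{\ba}$. Your observation about monotonicity is well-taken and actually sharper than the paper: admissibility as stated only yields $\fp(S)\Rightarrow\exists\ba\colon\fp(S_{\ba})$, and the converse needed for soundness does not follow---for instance $\fp(S)\Leftrightarrow |S|=1$ is admissible but not upward-closed and would break the guess-and-check procedure. The paper makes the same implicit leap (it asserts ``if and only if'' from admissibility alone), and since all of its listed examples are monotone, your reading of the intended hypothesis is almost certainly correct.
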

Here, the ``only if'' direction is trivial, and the ``if'' direction follows
from \cref{bounded-existential}. This is because \cref{bounded-existential}
allows us to assume that $\varphi$ is given as a $\BEPA$ formula $\exists^{\le
	k_1}x_1\cdots\exists^{\le k_n}x_n\colon\psi(x_1,\ldots,x_n,y_1,\ldots,y_m)$.
Moreover, admissibility of $\fp$ implies that $\fp$ is satisfied for $\varphi$
if and only if there exists an assignment $(a_1,\ldots,a_n)$ for the bounded
variables such that the quantifier-free formula
$\psi(a_1,\ldots,a_n,y_1,\ldots,y_m)$ satisfies $\fp$. Thus, we can guess the
assignment (which occupies polynomially many bits) and run the $\NP$ algorithm
for quantifier-free formulas.

\section{Quantifier elimination}
In this section, we prove \cref{bounded-existential}.
The following is our main geometric ingredient.
\begin{proposition}\label{solution-affine-transformation}
Let  $A\in\Z^{\ell\times n}$ and $\bb\in\Z^{\ell}$, and let $\Delta$ be an upper bound on all absolute values of the subdeterminants of $A$. If the system $A\bx\le\bb$ has an integral solution, then it has an integral solution of the form $D\bb+\bd$, where $D\in\Q^{n\times\ell}$ and $\bd\in\Q^{n}$ with $\ratnorm{D}\le\Delta$ and $\ratnorm{\bd}\le n\Delta^2$.
\end{proposition}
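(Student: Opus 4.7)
The plan is to pick a basis $B\subseteq[\ell]$ of size $n$ for which $A_B$ is invertible, set $\bv:=A_B^{-1}\bb_B$, and define $D\in\Q^{n\times\ell}$ to equal $A_B^{-1}$ on the columns indexed by $B$ and zero elsewhere, so that $D\bb=\bv$. I then take $\bd:=\bx^*-\bv$ for a suitably chosen integer feasible $\bx^*$, which gives $D\bb+\bd=\bx^*$ integral and satisfying $A\bx^*\le\bb$. The bound $\ratnorm{D}\le\Delta$ is immediate from Cramer's rule: each entry of $A_B^{-1}$ equals $\pm(\text{minor of }A_B)/\det(A_B)$, and both numerator and denominator are subdeterminants of $A$, hence absolutely bounded by $\Delta$.

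The crux of the argument is producing an integer feasible $\bx^*$ satisfying $\|\bx^*-\bv\|_\infty\le n\Delta$. Granted this, every component of $\bd=\bx^*-\bv$ has denominator dividing $\det(A_B)\le\Delta$ and absolute value at most $n\Delta$; representing each component with a common denominator of size at most $\Delta$ yields numerators bounded by $\Delta\cdot n\Delta=n\Delta^2$, so $\ratnorm{\bd}\le n\Delta^2$ as claimed.

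Producing this nearby integer solution is the main obstacle, and I would approach it via a proximity-type argument in the spirit of Cook, Gerards, Schrijver and Tardos. Starting from an arbitrary integer solution $\bx_0$, decompose $\bx_0-\bv=\sum_{i=1}^n\lambda_i\br_i$ inside the simplicial cone $\{\by:A_B\by\le 0\}$, whose extreme rays $\br_i$ are the negated columns of $A_B^{-1}$ and hence satisfy $\|\br_i\|_\infty\le\Delta$. One then iteratively replaces $\bx_0$ by $\bx_0-\bz$, where $\bz$ is an integer combination of the $\br_i$ (rescaled by $\det(A_B)$ so as to be integral), chosen to strictly decrease $\sum_i\lambda_i$ while preserving integrality and feasibility $A\bx\le\bb$. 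The delicate point is controlling the non-basis constraints $A_{\bar B}\bx\le\bb_{\bar B}$ during these updates, since a single-direction move along some $\br_i$ may violate them; one has to argue, via an integer Carath\'eodory-style decomposition of $\bx_0-\bv$ in the cone of feasible displacements, that whenever $\sum_i\lambda_i>n$ some feasible integer reduction always exists, eventually driving $\sum_i\lambda_i\le n$ and hence $\|\bx^*-\bv\|_\infty\le n\Delta$. Finally, the rank-deficient case (when $A$ has rank strictly less than $n$, so no invertible $n\times n$ submatrix exists) can be reduced to the full-rank setting by splitting off the lineality space of the polyhedron and applying the argument within the row span of $A$.
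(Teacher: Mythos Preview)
Your overall architecture is correct and matches the paper: find a rational solution written as $D\bb$ via Cramer's rule, then prove a proximity result giving an integral feasible $\bx^*$ with $\|\bx^*-D\bb\|_\infty\le n\Delta$, and finally bound $\ratnorm{\bd}$ using that every entry of $\bd$ has denominator dividing $\det(A_B)$. The denominator argument for $\ratnorm{\bd}\le n\Delta^2$ is exactly right.

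There are, however, two genuine gaps. The minor one: you never ensure that $\bv=A_B^{-1}\bb_B$ is feasible for the \emph{full} system $A\bx\le\bb$. An arbitrary invertible $n\times n$ submatrix need not give a feasible point; you must choose $B$ so that $\bv$ lies in the polyhedron (equivalently, pick a minimal face). The paper does this explicitly via the minimal-face characterisation.

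The serious gap is in your proximity argument. You work in the simplicial cone $\{\by:A_B\by\le 0\}$ determined by the basis alone, and you correctly observe that moves along its rays can violate the non-basis constraints $A_{\bar B}\bx\le\bb_{\bar B}$. Your proposed fix---an ``integer Carath\'eodory-style decomposition of $\bx_0-\bv$ in the cone of feasible displacements''---does not work as stated: since $\bv$ and $\bx_0$ are both feasible, $A_{\bar B}(\bx_0-\bv)$ can have either sign in each row, so $\bx_0-\bv$ does not lie in the recession cone $\{\by:A\by\le 0\}$, and there is no evident cone that simultaneously contains $\bx_0-\bv$ and whose integer generators preserve feasibility when subtracted. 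The paper's key idea (following Cook--Gerards--Schrijver--Tardos) removes the basis/non-basis distinction entirely: split \emph{all} rows of $A$ into $A_1,A_2$ according to whether $A_i\bv\le A_i\bx_0$ or $A_i\bv\ge A_i\bx_0$, and set $C=\{\by:A_1\by\ge 0,\ A_2\by\le 0\}$. Then $\bx_0-\bv\in C$ by construction, the generators of $C$ are integral with entries bounded by $\Delta$ (they come from subdeterminants of $\pm A$), and for any coefficients $0\le\mu_i\le\lambda_i$ the point $\bv+\sum_i\mu_i\by_i$ satisfies $A_1(\cdot)\le A_1\bx_0\le\bb_1$ and $A_2(\cdot)\le A_2\bv\le\bb_2$, hence is feasible. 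Taking $\mu_i=\lambda_i-\lfloor\lambda_i\rfloor$ yields $\bx^*=\bx_0-\sum_i\lfloor\lambda_i\rfloor\by_i$, which is integral, feasible, and at distance at most $n\Delta$ from $\bv$ in one step---no iteration and no separate treatment of non-basis rows is needed.
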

Before we prove \cref{solution-affine-transformation}, let us see how it implies \cref{bounded-existential} and \cref{quantifier-elimination}. 

\subparagraph*{Proof of \cref{quantifier-elimination}}
While \cref{quantifier-elimination} follows from \cref{bounded-existential}, it
follows very directly from \cref{solution-affine-transformation} and the proof
is a good warm-up for the proof of \cref{bounded-existential}. Therefore, we
first derive \cref{quantifier-elimination} from
\cref{solution-affine-transformation}.  Suppose we are given a Presburger
formula $\exists\bx\colon\varphi(\bx,\by)$, where $\bx=(x_1,\ldots,x_n)$ and
$\by=(y_1,\ldots,y_m)$ are variables and $\varphi$ is quantifier-free.

It is well-known that divisibility constraints can be eliminated in favor of
existentially quantified variables, since $a\equiv b\bmod{m}$ if and only if
$\exists x\colon a-b=mx$.  Thus, we may assume that $\varphi$ contains no
divisibility constraints. Then, by moving all negations inwards and using the
standard equivalence $\neg(r\le t)\iff t+1\le r$, we may assume that $\varphi$
is a positive Boolean combination of atoms $\ba^\top\bx\le\bb^\top\by+c$, where
$\ba\in\Z^{n}$, $\bb\in\Z^{m}$, and $c\in\Z$.

By bringing $\varphi$ into DNF, we can write it as a disjunction of
exponentially many systems of inequalities of the form $A\bx\le B\by+\bc$,
where $A\in\Z^{\ell\times n}$, $B\in\Z^{\ell\times m}$, and $\bc\in\Z^{\ell}$.
Thus, it suffices to construct a quantifier-free formula for $\exists\bx\colon
A\bx\le B\by+\bc$. Let $\Delta$ be an upper bound for all absolute values of
subdeterminants of $A$. Since the transformation into DNF does not change
the appearing
constants, we have that $\Delta\le ((m+n)\norm{A})^{m+n}$ is at most exponential in the
size of the input formula.

According to \cref{solution-affine-transformation}, a vector $\bx$ with
$\varphi(\bx,\by)$ exists if and only if there exists a matrix $D\in\Q^{n\times
\ell}$ and $\bd\in\Q^{n}$ with $\ratnorm{D}\le\Delta$ and $\ratnorm{\bd}\le
n\Delta^2$ such that (i)~substituting $D(B\by+\bc)+\bd$ for $\bx$ satisfies
$A\bx\le B\by+\bc$ and also (ii)~the vector $D(B\by+\bc)+\bd$ is integral.
Therefore, the formula $\exists\bx\colon A\bx\le B\by+\bc$ is equivalent to 
\[ \bigvee_{(D,\bd)\in P} A(D(B\by+\bc)+\bd)\le B\by+\bc ~\wedge~ D(B\by+\bc)+\bd\in\Z^{n} \]
where $P$ is the set of all pairs $(D,\bd)$ with $D\in\Q^{n\times\ell}$, $\bd\in\Q^n$, $\ratnorm{D}\le\Delta$, and $\ratnorm{\bd}\le n\Delta^2$. Clearly, $P$ contains at most exponentially many elements. Moreover, note that the condition $D(B\by+\bc)+\bd\in\Z^{n}$ is a set of $n$ modulo constraints.

\subparagraph*{Proof of \cref{bounded-existential}}
The proof of \cref{bounded-existential} is similar to the above
construction---we just need to circumvent the exponential conversion into DNF.
We proceed as follows. 

As above, we are given a Presburger
formula $\exists\bx\colon\varphi(\bx,\by)$, where $\bx=(x_1,\ldots,x_n)$ and
$\by=(y_1,\ldots,y_m)$ are variables and $\varphi$ is quantifier-free. Moreover, we may assume that $\varphi$ contains no divisibility constraints and is a positive Boolean combination of atoms
$\ba^\top\bx\le\bb^\top\by+c$, where $\ba\in\Z^{n}$, $\bb\in\Z^{m}$, and
$c\in\Z$.

Let $\ba_i^\top\bx\le\bb_i^\top\by+c_i$ for $i=1,\ldots,\ell$ be the set of all atoms occurring in $\varphi$ and let $A\in\Z^{\ell\times n}$ be the matrix with rows $\ba_i^\top$ and $B\in\Z^{\ell\times m}$ be the matrix of rows $\bb_i^\top$, and let $\bc \in \Z^{\ell}$ be the (column) vector with entries $c_1,\ldots,c_\ell$.
Thus, our formula $\varphi$ consists of $\ell$ atoms, each of which is a row in the system of linear inequalities $A\bx\le B\by+\bc$. Let $\varphi'$ be the formula obtained from $\varphi$ by replacing the atom $\ba_i^\top\bx\le\bb_i^\top\by+c_i$ by $z_i=1$, where $z_i$, $i\in\{1,\ldots,\ell\}$, is a fresh variable for each of the $\ell$ atoms. 
Now let $\Delta$ be an upper bound on all absolute values of the
subdeterminants of $A$. Then $\Delta\le ((m+n)\norm{A})^{m+n}$ is at most exponential in the size of the input formula.
Consider the formula
\begin{multline}
\exists z_1,\ldots,z_\ell\in\{0,1\}\colon\quad \exists D\in\Q^{n\times\ell},\ratnorm{D}\le \Delta\colon \\
\exists \bd\in\Q^{n},~\ratnorm{\bd}\le n\Delta^2\colon\quad\varphi'~\wedge~ D(B\by+\bc)+\bd\in\Z^{n}~\wedge~\bigwedge_{i=1}^\ell \left(z_i=1\to \psi_i\right),\label{bepa-formula}
\end{multline}
where $\psi_i$ is the formula $\ba_i^\top (D(B\by+\bc)+\bd)\le\bb_i^\top\by+c_i$. Note that \eqref{bepa-formula} is expressible in \BEPA: We introduce (i)~one variable for each $z_i$, (ii)~two variables for each entry of $D$ (one for the numerator, and one for the denominator), and (iii)~two variables for each entry of $\bd$. 

Each of the $n$ divisibility constraints of $D(B\by+\bc)+\bd\in\Z^{n}$ and each of the atoms $\psi_i$ can be written in the forms \eqref{bepa-atoms}. To see this, let $u_1,\ldots,u_k$ be the bounded variables used for the numerators or denominators in $D$ and $\bd$. Observe that the vector $B\by$ is a linear combination of $\by$ with integer coefficients. The matrix $D$ and the vector $\bd$ consist of quotients of bounded variables, hence rational functions in $\Z[u_1,\ldots,u_k]$. Thus, the vector $D(B\by+\bc)+\bd$ has in each entry an expression $s+\sum_{i=1}^m r_iy_i$, where $r_1,\ldots,r_m,s\in\Z[u_1,\ldots,u_k]$. Hence, by multiplying with the product of all denominators, we can write each inequality $\ba_i^\top(D(B\by+\bc)+\bd)\le\bb_i^\top\by+c_i$ in the form of \eqref{bepa-atoms}. Moreover, for the requirement $D(B\by+\bc)+\bd\in\Z^{n}$, we can write each row of $D(B\by+\bc)+\bd$ as a quotient $\tfrac{1}{q}(r+\sum_{i=1}^m p_iy_i)$, where $q,r,p_1,\ldots,p_m\in\Z[u_1,\ldots,u_k]$, so that membership in $\Z$ is equivalent to $\sum_{i=1}^m p_iy_i\equiv -r\pmod{q}$.

Let us argue why \eqref{bepa-formula} is equivalent to
$\exists\bx\colon\varphi(\bx,\by)$. Clearly, if \eqref{bepa-formula} is satisfied, then
$\bz=(z_1,\ldots,z_\ell)$ yields a set of atoms that, if satisfied, makes
$\varphi$ true. Moreover, the vector $D(B\by+\bc)+\bd$ is an integer vector
that satisfies all the atoms specified by $\bz$. 

Conversely, suppose
$\varphi(\bx,\by)$ holds for some $\bx\in\Z^n$ and $\by\in\Z^m$. First, we set exactly those $z_i$ to $1$ for which the $i$-th atom in $\varphi$ is satisfied by $\bx,\by$. Recall that each row of $A$ (and each row of $B$, and of $\bc$) corresponds to an atom in $\varphi$. Let $A'$ be the matrix obtained from $A$ by selecting those rows that correspond to atoms that are satisfied by our $\bx$ and $\by$. Define $B'$ similarly from $B$, and $\bc'$ from $\bc$. Then we have $A'\bx\le B'\by+\bc'$. Now \cref{solution-affine-transformation} yields a matrix $D'$ and a vector $\bd$ (each with $n$ rows) with $A'(D'(B'\by+\bc')+\bd)\le B'\by+\bc'$. Now the set of rows of $B'\by+\bc'$ is a subset of the rows of $B\by+\bc$, so by inserting zero-columns into $D'$, we can construct a matrix $D$ with $D(B\by+\bc)=D'(B'\by+\bc')$. Hence, we have $A'(D(B\by+\bc)+\bd)\le B'\by+\bc'$. The latter means exactly that $\psi_i$ is satisfied for every $i$ with $z_i=1$.
Thus, this choice of $z_1,\ldots,z_\ell$, $D$, and $\bd$ satisfies \eqref{bepa-formula}. This establishes \cref{bounded-existential}.

\subsection{Constructing solutions as affine transformations}
\subsubsection{Convex geometry}
Before we start with the proof of \cref{solution-affine-transformation},
we recall some standard definitions from convex geometry from Schrijver's
book~\cite{Schrijver1986}. Below, we let $\R_+=\{ r\in \R\:|\: r \ge 0 \}$.
A \emph{polyhedron} is a set $P=\{\bx \in \R^n \:|\:A\bx\leq \bb\}$, where $A$ is
an $\ell \times n$ integer matrix and $\bb \in \Z^\ell$. 
Let $C \subseteq \R^\ell$, then
$C$ is a \emph{convex cone} if $\lambda \bx+\mu \by\in C$ for all $\bx,\by\in C$ and $\lambda,\mu \in \R_+$. Given a set $X \subseteq \R^\ell$, 
\[\cone(X)=\{\lambda_1\bx_1+\dots+\lambda_t\bx_t\mid t\geq 0,~\bx_1,\dots,\bx_t\in X,~\lambda_1,\dots,\lambda_t\in \R_+ \}\,.\]
The     \emph{convex hull} of a set $X\subseteq \R^\ell$ is the smallest convex set containing that set, i.e.,
\begin{multline*}
	\convhull(X)=\{\lambda_1\bx_1+\dots +\lambda_t\bx_t\mid t\geq 1,~\bx_1,\bx_2,\dots \bx_t\in X, \\
\lambda_1,\dots,\lambda_t\in \R_+,~\lambda_1+\dots+\lambda_t=1\}\,.
\end{multline*}

Next, we recall some terminology concerning the structure of polyhedra. 
The \emph{characteristic cone} of a polyhedron $P=\{\bx \mid A\bx \leq \bb\} \subseteq \R^n$ is the set $\charcone(P)\coloneqq \{\by \in \R^n \mid A\by\leq 0\}$.
The \emph{lineality space} of polyhedron $P$ is the set $\linspace(P)\coloneqq \{\by \in \R^n \mid A\by=0\}$.

\begin{definition}[Faces]
	Given a polyhedron $P \subseteq \R^n$, $F\subseteq P$ is a \emph{face} of $P$ if and only if $F$ is non-empty and 
	\[F=\{\bx\in P\:|\: A^{\prime}\bx=\bb^{\prime}\}\]
	for some subsystem $A^{\prime}\bx\leq \bb^{\prime}$ of $A\bx\leq \bb$. We call
	$F \subseteq P$ a \emph{proper face} of $P$
	if $F \neq \emptyset$ and $F \neq P$.
\end{definition}
It follows that $P$ has only finitely many faces. A \emph{minimal face} of $P$ is a face not containing any other face.
We have the following characterization of minimal faces \cite[Thm~8.4]{Schrijver1986}, 
\begin{proposition}
	\label{char-min-face}
	A set $F$ is a minimal face of a polyhedron $P \subseteq \R^n$ if and only if $\emptyset \neq F\subseteq P$ and 
	\[F=\{\bx \in \R^n \:|\:A^{\prime}\bx=\bb^{\prime}\}\]
	for some subsystem $A^{\prime}\bx\leq \bb^{\prime}$ of $A\bx\leq \bb$, such that the matrix $A^{\prime}$ has the same rank as $A$.
\end{proposition}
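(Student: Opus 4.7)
The plan is to establish both implications through a single geometric fact: every nonempty minimal face of $P$ coincides with the full affine subspace $\{\bx\in\R^n : A'\bx=\bb'\}$ cut out by its implicit equalities, and this affine subspace has dimension $n-\mathrm{rank}(A)$. The equivalence then reduces to matching dimensions between a candidate $F$ and the minimal faces of $P$.

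For the forward direction, let $F$ be a minimal face. I will split the rows of $A$ into those that hold with equality throughout $F$ (the \emph{implicit equalities}, forming a system $A'\bx=\bb'$) and the remainder $A''\bx\le\bb''$. For each row $\ba_i$ of $A''$, minimality forces $\ba_i^\top\bx<b_i$ strictly for every $\bx\in F$: otherwise $\{\bx\in F : \ba_i^\top\bx=b_i\}$ would be a nonempty proper sub-face of $F$. A convex-combination argument then produces a point $\bx^\star\in F$ with $A''\bx^\star<\bb''$ coordinate-wise. Next, I argue that the whole affine subspace $\{\bx : A'\bx=\bb'\}$ lies in $P$: if some $\bx^\circ$ in it violated a row of $A''$, the segment from $\bx^\circ$ to $\bx^\star$ would contain a point of $F$ at which that row is tight, again yielding a proper sub-face and contradicting minimality. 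Hence $F=\{\bx : A'\bx=\bb'\}$. To show $\mathrm{rank}(A')=\mathrm{rank}(A)$, I assume not and pick $\bv\in\ker A'\setminus\ker A$; then some row $\ba_j$ of $A''$ has $\ba_j^\top\bv\neq 0$. The translate $\bx^\star+t\bv$ stays in $F$ for small $t>0$ (since $A'(\bx^\star+t\bv)=\bb'$ and the $A''$ slack at $\bx^\star$ is preserved for small $t$), and because $\ba_j^\top(\bx^\star+t\bv)$ grows linearly while bounded above by $b_j$, some row of $A''$ must become tight at a finite $t_0>0$. This yields a proper sub-face of $F$, contradicting minimality.

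The backward direction follows by dimension counting. Suppose $F=\{\bx : A'\bx=\bb'\}$ with $\emptyset\neq F\subseteq P$ and $\mathrm{rank}(A')=\mathrm{rank}(A)$. Then $F=\{\bx\in P : A'\bx=\bb'\}$ is a face of $P$ of dimension $n-\mathrm{rank}(A')=n-\mathrm{rank}(A)$. Since $P$ has finitely many faces, $F$ contains some minimal face $F^\star$. By the forward direction $F^\star$ is an affine subspace of the same dimension $n-\mathrm{rank}(A)$ as $F$, hence $F^\star=F$ and $F$ itself is minimal.

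The hard part is the forward direction. The principal difficulty is to correctly manage strict versus non-strict inequalities: specifically, ensuring the existence of $\bx^\star$ with simultaneous slack on every non-implicit row (so small perturbations stay in $P$), and verifying that the move along $\bv$ activates some new constraint after a finite step. Both steps rely on the dichotomy that in a minimal face each row of $A$ is either a global equality or a strict inequality, which is itself a consequence of minimality; once this dichotomy is in place, the remainder is straightforward linear algebra.
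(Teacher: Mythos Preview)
The paper does not prove this proposition: it is quoted verbatim as a known fact from Schrijver's book \cite[Thm.~8.4]{Schrijver1986}, with no accompanying argument. So there is no ``paper's own proof'' to compare against.

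Your proof is essentially the standard one and is correct in outline. Two small points of sloppiness are worth tightening. First, once you have established that every row of $A''$ is strict on all of $F$, the ``convex-combination argument'' is superfluous: any point of $F$ already serves as $\bx^\star$. Second, in the step showing $\{\bx:A'\bx=\bb'\}\subseteq P$, you write that the segment from $\bx^\circ$ to $\bx^\star$ contains a point of $F$ at which \emph{that} row is tight. This is not quite right: at the parameter where the violated row becomes tight, other rows of $A''$ may already be violated, so that point need not lie in $P$. The fix is to take the \emph{first} $t_0\in(0,1]$ at which \emph{any} row of $A''$ becomes tight; then $\bx^\star+t_0(\bx^\circ-\bx^\star)\in P$, hence in $F$, and you get the contradiction with strictness of $A''$ on $F$. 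A similar remark applies to the rank step: you should note that one may replace $\bv$ by $-\bv$ to ensure $\ba_j^\top\bv>0$, and then the contradiction is simply that $F=\{\bx:A'\bx=\bb'\}$ contains the whole line $\bx^\star+t\bv$ yet $F\subseteq P$ forces $\ba_j^\top(\bx^\star+t\bv)\le b_j$ for all $t$, which is impossible. With these adjustments the argument is complete.
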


The following is shown in \cite[Sec.~8.8]{Schrijver1986}:
\begin{proposition}\label{minimal-proper-faces}
	Let $C$ be the cone $\{\bx\in\R^n \:|\: A\bx \leq \bzero\}$. There is a finite collection $G_1,G_2.\dots, G_s$ of subsets, which are of the form $G_i=\{\bx\in\R^n \mid \ba_i^\top\bx\le 0,~A'\bx=\bzero\}$, where $\begin{bmatrix} A' \\ \ba_i^\top\end{bmatrix}$ is a subset of the rows of $A$, such that the following holds. If we choose for each $i=1,\ldots,s$ a vector $\by_i$ from $G_i\setminus\linspace(C)$ and choose $\bz_0,\ldots,\bz_t$ in $\linspace(C)$ such that $\linspace(C)=\cone(\bz_0,\ldots,\bz_t)$, then
		\[ C=\cone(\by_1,\ldots,\by_s,\bz_0,\ldots,\bz_t). \]
\end{proposition}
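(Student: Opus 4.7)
The plan is to reduce the statement to the classical Minkowski--Weyl generation theorem for pointed polyhedral cones by quotienting out the lineality space $L := \linspace(C)$.

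\emph{Step 1 (Quotient and pointedness).} Let $\pi\colon\R^n\to\R^n/L$ be the canonical quotient and set $\bar C := \pi(C)$. Then $\bar C$ is a polyhedral cone whose lineality space is trivial, i.e., $\bar C$ is pointed. Minkowski--Weyl for pointed polyhedral cones yields finitely many extreme rays $\rho_1,\ldots,\rho_s$ of $\bar C$, and $\bar C = \cone(\bar{\by}_1,\ldots,\bar{\by}_s)$ for any choice of nonzero $\bar{\by}_i\in\rho_i$.

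\emph{Step 2 (Identifying extreme rays with the $G_i$).} By \cref{char-min-face}, the unique minimal face of $C$ is $L$ itself, since $\{\bx:A\bx=\bzero\}$ is a face whose defining subsystem has the full rank of $A$ and is contained in any other such face. For each extreme ray $\rho_i$ of $\bar C$, the preimage $\pi^{-1}(\rho_i)\cap C$ is a face $F_i$ of $C$ with $\dim F_i = \dim L + 1$, so $F_i$ is a minimal face of $C$ strictly containing $L$. Using the face characterization, $F_i = \{\bx: A'\bx = \bzero\}$ for a subsystem $A'$ of rows of $A$ whose rank is exactly $\operatorname{rank}(A)-1$; pick any further row $\ba_i^\top$ of $A$ whose addition to $A'$ restores full rank. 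Then $F_i$ is a linear subspace of dimension $\dim L + 1$, and $\{\bx \in F_i : \ba_i^\top \bx \le 0\}$ is precisely one of its two closed half-subspaces modulo $L$ -- namely $G_i$. Any vector in $G_i\setminus L$ therefore projects to a nonzero vector on $\rho_i$.

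\emph{Step 3 (Assembly).} Given $\bx\in C$, write $\pi(\bx)=\sum_i\lambda_i\pi(\by_i)$ with $\lambda_i\ge 0$ using Steps 1 and 2. Then $\bx - \sum_i \lambda_i\by_i\in L = \cone(\bz_0,\ldots,\bz_t)$, hence $\bx\in\cone(\by_1,\ldots,\by_s,\bz_0,\ldots,\bz_t)$; the reverse inclusion is immediate since all generators lie in $C$.

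The main obstacle is Step 2: matching the required algebraic form of $G_i$ (exactly one ``$\le$'' row together with an equality subsystem drawn from the rows of $A$) with the geometric description as the ray part of a codimension-one face above $L$. The rank bookkeeping must be done carefully -- $A'$ must have rank exactly $\operatorname{rank}(A)-1$, and $\ba_i^\top$ must strictly increase the rank -- otherwise $G_i$ either collapses to $L$ or enlarges to a higher-dimensional face, and its projection would no longer be a single extreme ray of $\bar C$.
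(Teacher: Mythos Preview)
The paper does not prove this proposition; it is simply quoted from Schrijver~\cite[Sec.~8.8]{Schrijver1986}. Your overall strategy---quotient by the lineality space $L$, take extreme rays of the resulting pointed cone, and lift back---is exactly the standard argument used there, and Steps~1 and~3 are fine.

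Step~2, however, contains a genuine error as written. You invoke \cref{char-min-face} to conclude $F_i=\{\bx:A'\bx=\bzero\}$ and then call $F_i$ a linear subspace, but \cref{char-min-face} characterizes only \emph{minimal} faces, and you have just argued that the unique minimal face of $C$ is $L$, not $F_i$. In fact $F_i=\{\bx\in C:A'\bx=\bzero\}$ is only a closed \emph{half} of the linear space $V_i:=\{\bx:A'\bx=\bzero\}$ (for the nonnegative orthant in $\R^2$, the $1$-dimensional faces are rays, not lines), so the sentence ``$\{\bx\in F_i:\ba_i^\top\bx\le 0\}$ is one of its two closed half-subspaces---namely $G_i$'' does not parse. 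The repair is to keep $V_i$ and $F_i$ distinct: choose $A'$ so that $V_i$ is the \emph{linear hull} of $F_i$ (then indeed $\operatorname{rank}(A')=\operatorname{rank}(A)-1$), and pick any row $\ba_i^\top$ of $A$ outside the row span of $A'$. Because $\ba_i^\top$ is a row of $A$ we get $F_i\subseteq C\subseteq\{\bx:\ba_i^\top\bx\le 0\}$, hence $F_i\subseteq G_i$; and because $\ba_i^\top$ vanishes on $L$ but not on all of $V_i$, the set $G_i$ is also a closed half-space of $V_i$ with boundary $L$. Two closed half-spaces of $V_i$ with common boundary $L$, one contained in the other, must coincide, so $G_i=F_i$ and $G_i\setminus L$ projects exactly onto $\rho_i\setminus\{0\}$. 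With this correction your Step~3 goes through unchanged.
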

Here, the sets $G_i$ are also called minimal proper faces (but
\cref{minimal-proper-faces} is not a characterization of those).

\subsubsection{Proof of \cref{solution-affine-transformation}}
We now prove \cref{solution-affine-transformation}. For the remainder of the
section, let  $A\in\Z^{\ell\times n}$ and $\bb\in\Z^{\ell}$. Moreover, let $\Delta$
be an upper bound on all absolute values of the sub-determinants of $A$. 
Our first step is a simple application of standard facts about polyhedra.
\begin{lemma}\label{solution-rational-affine-transformation}
If the system $A\bx\le \bb$ has a solution in $\Q^{n}$, then it has one of the form $\frac{1}{a}E\bb$, where $E\in\Z^{n\times\ell}$, $a\in\Z\setminus\{0\}$, $|a|\le\Delta$, and $\norm{E}\le\Delta$.
\end{lemma}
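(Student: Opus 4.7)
My plan is to exploit the structure of a minimal face of the polyhedron $P = \{\bx \in \Q^n \mid A\bx \le \bb\}$ and then pick a vertex-like point inside it via Cramer's rule. Since $P$ is non-empty by hypothesis, it possesses a minimal face $F$. By \cref{char-min-face}, there is a subsystem $A'\bx \le \bb'$ of $A\bx \le \bb$ such that $F = \{\bx \in \R^n \mid A'\bx = \bb'\}$ and $\operatorname{rank}(A') = \operatorname{rank}(A) =: r$. Discarding redundant rows, I may assume $A' \in \Z^{r \times n}$ and $\bb' \in \Z^r$, and that $A'$ consists of some selection $T \subseteq \{1, \ldots, \ell\}$ of rows of $A$.

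Since $A'$ has rank $r$, it has $r$ linearly independent columns indexed by some $S \subseteq \{1, \ldots, n\}$ of size $r$. Let $M \in \Z^{r\times r}$ be the invertible submatrix of $A'$ formed by these columns. Setting the coordinates of $\bx$ outside $S$ to $0$ and solving $M \bx_S = \bb'$ via Cramer's rule yields
\[ \bx_S = M^{-1}\bb' = \tfrac{1}{\det(M)}\,\operatorname{adj}(M)\,\bb'. \]
By construction $\bx \in F \subseteq P$, so $A\bx \le \bb$ holds. Now I set $a = \det(M)$. Since $M$ is an $r \times r$ submatrix of $A$, the bound $|a|\le \Delta$ is immediate. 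The entries of $\operatorname{adj}(M)$ are signed $(r-1)\times(r-1)$ subdeterminants of $M$, hence also subdeterminants of $A$, so each is bounded in absolute value by $\Delta$.

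It remains to promote $\operatorname{adj}(M)\bb'$ into a matrix acting on the full vector $\bb$ and embed the result into $\Z^n$. I define $E \in \Z^{n \times \ell}$ row by row: for $j \in S$, the $j$-th row of $E$ is the corresponding row of $\operatorname{adj}(M)$, placed in the columns indexed by $T$ and extended with zeros elsewhere; for $j \notin S$, the $j$-th row is $\bzero$. Then $E \bb = \operatorname{adj}(M) \bb'$ when restricted to the $S$-coordinates, and $0$ elsewhere, so $\frac{1}{a} E \bb$ is exactly the $\bx$ constructed above. The entries of $E$ are either $0$ or entries of $\operatorname{adj}(M)$, so $\norm{E} \le \Delta$.

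The only real obstacle is bookkeeping: making sure the rank equality supplied by \cref{char-min-face} really lets me extract a square invertible submatrix $M$ (by first dropping dependent rows of $A'$, then choosing independent columns), and that the zero-padding used to go from $\bb'$ to $\bb$ and from $\bx_S$ to $\bx$ keeps all resulting bounds on $E$ and $a$ in terms of subdeterminants of $A$ rather than of $A'$. Since every $r \times r$ (and $(r-1)\times(r-1)$) subdeterminant of $A'$ is itself a subdeterminant of $A$, the bound $\Delta$ carries through, completing the argument.
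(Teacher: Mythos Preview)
Your proof is correct and follows essentially the same approach as the paper: pick a point in a minimal face via \cref{char-min-face}, drop redundant rows, extract an invertible $r\times r$ submatrix, and solve by Cramer's rule with the remaining coordinates set to zero. The paper writes $A'=(B\quad C)$ and uses $B^{-1}$ where you use index sets $S$, $T$ and $\operatorname{adj}(M)$; your version is slightly more explicit about the zero-padding needed to produce $E\in\Z^{n\times\ell}$, which the paper leaves implicit.
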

\begin{proof}
	It is well-known that if $A\bx\le\bb$ has a rational solution, then there is a solution inside a minimal face of the polyhedron $P=\{\bx\in\R^n\mid A\bx\le\bb\}$ defined by the system of linear inequalities $A\bx\le\bb$~\cite[Thm.~8.5]{Schrijver1986}. Recall that a \emph{minimal face} is a non-empty subset $F\subseteq P$ of the form 
\begin{equation} F=\{\bx\in\R^{n} \mid A'\bx=\bb'\}, \label{minimal-face}\end{equation}
	where $A'\bx\le\bb'$ is a subset of the inequalities in $A\bx\le\bb$ such that the matrix $A'$ has the same rank as $A$~(see \cref{char-min-face} or \cite[Thm.~8.4]{Schrijver1986}). Suppose $F$ is a non-empty minimal face and satisfies \eqref{minimal-face}. Here, we may assume that the rows of $A'$ are linearly independent (otherwise, we can remove redundant rows without changing $F$). This means, $A'$ can be written as $A'=(B\quad C)$ such that $B$ is invertible. Then the vector $\bx^*:=(B^{-1}\bb'\quad\bzero)^\top$ belongs to $F$. Since $F\subseteq P$, we know that $A\bx^*\le\bb$. By Cramer's rule, the entry $(j,i)$ of $B^{-1}$ is $\tfrac{(-1)^{i+j}\det(B_{ij})}{\det(B)}$, where $B_{ij}$ is the matrix obtained from $B$ by removing the $i$-th row and $j$-th column. Note that $|\det(B_{ij})|\le\Delta$ and $|\det(B)|\le\Delta$. In particular, $\bx^*$ can be written as $\tfrac{1}{a}E\bb$, where $a=\det(B)$ and $\norm{E}\le\Delta$.
\end{proof}

We also employ the following well-known fact, which again uses standard arguments.
\begin{lemma}\label{cone-generators}
There are integral vectors $\by_1,\ldots,\by_s$ with each component being at most $\Delta$ in absolute value, such that
$\{\bx\in\R^n \mid A\bx\le \bzero \} = \cone(\by_1,\ldots,\by_s)$.
\end{lemma}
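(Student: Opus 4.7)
The plan is to apply Proposition~\ref{minimal-proper-faces} to decompose the cone $C = \{\bx \in \R^n \mid A\bx \le \bzero\}$ and then bound each of the resulting generators using Cramer-rule arguments analogous to those in the proof of Lemma~\ref{solution-rational-affine-transformation}. By Proposition~\ref{minimal-proper-faces}, we write $C = \cone(\by_1,\ldots,\by_s, \bz_0,\ldots,\bz_t)$, where each $\by_i \in G_i \setminus \linspace(C)$ for a minimal proper face $G_i = \{\bx \mid \ba_i^\top \bx \le 0,~A'_i\bx = \bzero\}$ and $\bz_0,\ldots,\bz_t$ span $\linspace(C) = \ker(A)$. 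Since adjoining the negatives $-\bz_j$ lets us span $\linspace(C)$ as a cone, it suffices to exhibit, for each required generator, an integer representative with all entries bounded by $\Delta$.

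\emph{Lineality-space generators.} I would select $r = \operatorname{rank}(A)$ linearly independent rows of $A$ to form a submatrix $A''$ and, after a column permutation, write $A'' = (B \mid C')$ with $B \in \Z^{r\times r}$ invertible. For each $j \in \{1,\ldots,n-r\}$, the vector $\bv_j = (-\operatorname{adj}(B)\, C'\be_j,\, \det(B)\be_j)^\top$ is an integer element of $\ker(A'') = \ker(A) = \linspace(C)$. A cofactor-expansion computation shows that each entry of $\operatorname{adj}(B)\, C'\be_j$ equals, up to sign, the determinant of the $r \times r$ matrix obtained from $B$ by replacing one of its columns with a column of $C'$; this is a subdeterminant of $A$ and therefore bounded by $\Delta$. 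Combined with $|\det(B)| \le \Delta$, this gives $\norm{\bv_j} \le \Delta$.

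\emph{Minimal-face generators.} For each $i$, I would use $G_i \not\subseteq \linspace(C)$ to pick some $\bx^* \in G_i$ with $\ba_k^\top \bx^* < 0$ for some row $\ba_k^\top$ of $A$, and then positively rescale so that $\ba_k^\top \bx^* = -1$. The system obtained from $A\bx \le \bzero$, the pair $A'_i\bx \le \bzero$ and $-A'_i\bx \le \bzero$ expressing $A'_i\bx = \bzero$, and the inequality $\ba_k^\top \bx \le -1$---call it $M\bx \le \bn$---then has a rational solution, and crucially the right-hand side $\bn$ has a \emph{single} nonzero entry. Every subdeterminant of $M$ is in absolute value equal to a subdeterminant of $A$ (each row of $M$ is $\pm$ a row of $A$), so Lemma~\ref{solution-rational-affine-transformation} yields a rational solution $\bx = \tfrac{1}{a}E\bn$ with $|a| \le \Delta$ and $\norm{E} \le \Delta$. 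Because $\bn$ has only one nonzero entry, $E\bn$ is $\pm$ a single column of $E$, so $\norm{E\bn} \le \Delta$. Hence $|a|\bx = \pm E\bn$ is an integer vector with $\norm{|a|\bx} \le \Delta$; as a positive multiple of $\bx$, it lies in $G_i$, and $\ba_k^\top(|a|\bx) = -|a| < 0$ shows it lies outside $\linspace(C)$.

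The main subtlety is to set up the per-face linear system so that $\bn$ has only one nonzero entry: this collapses $E\bn$ to a single column of $E$ and avoids a summation over all rows of $M$ that would introduce an extra factor of the number of inequalities into the bound. Without this care one would obtain only $\norm{E\bn} \le O((n+\ell)\Delta)$ rather than the claimed $\Delta$.
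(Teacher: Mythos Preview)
Your proof is correct and follows essentially the same approach as the paper: both invoke \cref{minimal-proper-faces}, handle the lineality space via Cramer/adjugate bounds (the paper just cites \cite[Cor.~3.1c]{Schrijver1986} where you spell out the $\operatorname{adj}(B)C'\be_j$ construction), and for each minimal proper face produce a generator by solving a system whose right-hand side has a single nonzero entry. The only packaging difference is that the paper works directly with the square block of $\begin{bmatrix}A'\\\ba_i^\top\end{bmatrix}$ and applies Cramer's rule to $E^{-1}(0,\ldots,0,-1)$, whereas you enlarge the system and invoke \cref{solution-rational-affine-transformation} as a black box; both routes yield the same $\Delta$ bound for the same reason.
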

\begin{proof}
	Let $C=\{\bx\in\R^n \mid A\bx\le\bzero\}$. The \lcnamecref{cone-generators} follows from \cref{minimal-proper-faces}. First, it is a consequence of Cramer's rule that we can choose $\bz_0,\ldots,\bz_t$ as a basis of $\linspace(C)=\{\bx\in\R^n\mid A\bx=\bzero\}$ so that all $\bz_0,\ldots,\bz_t$ are integral and have absolute values at most $\Delta$ in all components. For example, see \cite[Cor.~3.1c]{Schrijver1986}. It remains to pick from each set 
\[ G_i\setminus\linspace(C)=\{\bx\in\R^n \mid \ba_i^\top \bx< 0,~A'\bx=\bzero\} \]
an integral vector with all components bounded by $\Delta$. For
this, we can proceed similarly to
\cref{solution-rational-affine-transformation}. As a subset of rows of
$A$, the matrix $B=\begin{bmatrix} A' \\ \ba_i^\top\end{bmatrix}$ has rank
at most $n$, and we may assume $\ba_i\ne\bzero$ (otherwise
$G_i\setminus\linspace(C)$ would be empty).  Moreover, we may
assume that the rows of $B$ are linearly independent, as
otherwise we can remove rows from $A'$ without changing $G_i$.
We can thus write $B=(E\quad F)$, where $E$ is invertible. By
	Cramer's rule (see, e.g. \cite[Sec.~3.2]{Schrijver1986}), the $j$-th component of the vector
$\by=E^{-1}(0,\ldots,0,-1)$ can be written as
$\frac{1}{\det(E)}\det(\tilde{E})$, where $\tilde{E}$ is obtained from
$E$ by replacing the $j$-th column by $(0,\ldots,0,-1)$. This means, the vector
	$|\det(E)|\cdot\by$ has only integer components and all of them have
	absolute value at most $\Delta$. Now let $\by^*$ be the vector obtained
	from $|\det(E)|\cdot\by$ adding as many $0$'s as $F$ has columns. Then
	we have $B\by^*=E(|\det(E)|\cdot\by)=(0,\ldots,0,-|\det(E)|)$ and thus $\by^*\in
	G_i\setminus\linspace(C)$.
\end{proof}

We also rely on the well-known theorem of Carath\'{e}odory~\cite[Cor.~7.1(i)]{Schrijver1986}.
\begin{theorem}[Carath\'{e}odory's theorem]
If $X\subseteq\R^n$ is some subset and $\bx\in \cone(X)$, then there are linearly independent $\bx_1,\ldots,\bx_m\in X$ with $\bx\in \cone(\bx_1,\ldots,\bx_m)$.
\end{theorem}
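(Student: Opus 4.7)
The plan is to start from an arbitrary conic representation and iteratively shrink it until the remaining vectors are linearly independent. Concretely, since $\bx\in\cone(X)$, by definition there exist $\bx_1,\ldots,\bx_t\in X$ and nonnegative scalars $\lambda_1,\ldots,\lambda_t\in\R_+$ with $\bx=\sum_{i=1}^t \lambda_i \bx_i$. I would choose such a representation of \emph{minimum} $t$ (over all choices of $\bx_i\in X$ and nonnegative $\lambda_i$) and then show this minimizing representation must use linearly independent vectors. WLOG we may assume $\lambda_i>0$ for every $i$, since any zero coefficient can be dropped, contradicting minimality.

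Suppose for contradiction that $\bx_1,\ldots,\bx_t$ are linearly dependent. Then there exist scalars $\mu_1,\ldots,\mu_t$, not all zero, with $\sum_{i=1}^t \mu_i \bx_i = \bzero$. By replacing $\mu$ with $-\mu$ if necessary, we may assume at least one $\mu_i$ is strictly positive. For any real $\alpha$ we then have
\[ \bx \;=\; \sum_{i=1}^t (\lambda_i - \alpha\mu_i)\, \bx_i. \]
Set $\alpha \coloneqq \min\{\lambda_i/\mu_i : \mu_i > 0\}$, which is well defined and nonnegative. For indices $i$ with $\mu_i \le 0$ we have $\lambda_i - \alpha\mu_i \ge \lambda_i \ge 0$, and for indices with $\mu_i > 0$ the choice of $\alpha$ guarantees $\lambda_i - \alpha\mu_i \ge 0$, with equality for at least one index $i^\ast$ attaining the minimum. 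Thus the new representation has all nonnegative coefficients but contains strictly fewer than $t$ nonzero terms, contradicting minimality of $t$. Hence $\bx_1,\ldots,\bx_t$ must already be linearly independent, giving exactly the conclusion (with $m=t$).

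The argument needs no hard input beyond elementary linear algebra, so there is no real obstacle; the only subtle point worth double-checking is that $\alpha\ge 0$ and that the minimum is achieved on a positive coordinate (so that a coefficient truly becomes zero rather than merely shrinking), both of which follow from the sign convention on $\mu$. A minor stylistic alternative would be to argue by downward induction on $t$ instead of invoking a minimum directly, but the minimizing-representation framing avoids an explicit inductive statement and yields the linearly independent subset $\bx_1,\ldots,\bx_m\in X$ in one step.
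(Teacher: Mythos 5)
Your argument is correct, and it is the standard Carath\'{e}odory reduction: take a conic representation $\bx=\sum_{i=1}^t\lambda_i\bx_i$ with $t$ minimal, and show that a linear dependence $\sum_i\mu_i\bx_i=\bzero$ (normalized so some $\mu_i>0$) lets you shift by $\alpha=\min\{\lambda_i/\mu_i : \mu_i>0\}$ to kill a coefficient while keeping all coefficients nonnegative, contradicting minimality. The one thing to note in comparison is that the paper does not prove this statement at all: it invokes it as a known result, citing Corollary~7.1(i) in Schrijver's book, where it is derived from the fundamental theorem of linear inequalities rather than by the direct exchange argument you give. Your self-contained proof is a perfectly good (arguably more elementary) substitute; the only cosmetic points are the trivial case $t=0$ (then $\bx=\bzero$ and the empty family is linearly independent, consistent with the paper's convention that $\cone(X)$ allows $t\ge 0$) and the observation that $\alpha\ge 0$ indeed follows since all $\lambda_i>0$ after discarding zero coefficients, both of which you already address.
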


The following lemma is the key ingredient for proving \cref{solution-affine-transformation}. Its
proof closely follows the ideas of \cite[Thm.~17.2]{Schrijver1986}, which Schrijver attributes to Cook, Gerards, Schrijver, and Tardos~\cite{cook1986tardos}. The
latter shows that for every rational $\bx$ that maximizes an expression
$\bc^\top\bx$ among the solutions of $A\bx\le\bb$, there is a close-by
integral vector that maximizes this expression among all integral vectors.
\begin{lemma}\label{solution-integral-close}
Suppose the system $A\bx\le \bb$ has an integral solution, and let $\br\in\Q^n$ be a rational solution. Then there is an integral solution $\bz^*\in\Z^n$ with $\norm{\bz^*-\br}\le n\Delta$. 
\end{lemma}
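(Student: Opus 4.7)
The plan is to mirror the classical argument for \cite[Theorem~17.2]{Schrijver1986}, attributed to Cook, Gerards, Schrijver, and Tardos. Fix any integral solution $\bx^* \in \Z^n$ of $A\bx \le \bb$ (guaranteed by hypothesis), and set $\bv := \bx^* - \br$.

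The first step is to show that $\bv$ lies in the cone $C_\br := \{\by \in \R^n \mid A_= \by \le \bzero\}$, where $A_=$ is the submatrix of $A$ consisting of the rows tight at $\br$. This follows from convexity: since $\br + t\bv = (1-t)\br + t\bx^* \in P$ for every $t \in [0,1]$, the tight rows force $A_=\bv \le \bzero$. Applying \cref{cone-generators} to $A_=$ (whose subdeterminants are still bounded by $\Delta$, being submatrices of $A$) produces integer generators $\by_1,\ldots,\by_s$ of $C_\br$ with $\norm{\by_i} \le \Delta$. Carath\'{e}odory's theorem then rewrites $\bv = \sum_{i=1}^m \lambda_i \by_i$ with $\by_1,\ldots,\by_m$ linearly independent, $\lambda_i \ge 0$, and $m \le n$.

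The natural candidate is
\[
\bz^* \;:=\; \bx^* - \sum_{i=1}^m \lfloor \lambda_i \rfloor \by_i \;=\; \br + \sum_{i=1}^m \{\lambda_i\}\, \by_i,
\]
where $\{\lambda\} := \lambda - \lfloor\lambda\rfloor \in [0,1)$. The first expression makes integrality of $\bz^*$ manifest, and the second yields the distance bound $\norm{\bz^* - \br} \le \sum_i \{\lambda_i\}\,\norm{\by_i} < m\Delta \le n\Delta$.

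The remaining task is to verify $A\bz^* \le \bb$. For any row $\ba^\top$ tight at $\br$, the second form of $\bz^*$ gives $\ba^\top \bz^* = \ba^\top \br + \sum_i \{\lambda_i\}\,\ba^\top\by_i \le \ba^\top\br$, because each $\{\lambda_i\} \ge 0$ and $\ba^\top\by_i \le 0$ by membership in $C_\br$. The main obstacle, and the subtlest step, is feasibility for rows $\ba^\top$ strict at $\br$: here the generators $\by_i$ need not satisfy $\ba^\top\by_i \le 0$, so simple component-wise monotonicity is unavailable. The Cook--Gerards--Schrijver--Tardos argument instead combines the identity $\sum_i \lambda_i\,\ba^\top\by_i = \ba^\top(\bx^* - \br) \le b - \ba^\top\br$ with the fact that $\ba^\top\bz^* = \ba^\top\bx^* - \sum_i \lfloor\lambda_i\rfloor\,\ba^\top\by_i$ is an \emph{integer} (because $\ba$ and $\bz^*$ are both integral) to rule out $\ba^\top \bz^* > b$. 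This integer-feasibility check on the strict rows is the step that will demand the most care, and where the proof most closely mirrors Schrijver's.
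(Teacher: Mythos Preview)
There is a genuine gap: your treatment of the strict rows cannot be completed as sketched, because you are not actually using the Cook--Gerards--Schrijver--Tardos cone. The real argument (which the paper follows) does \emph{not} use the tangent cone $C_\br=\{\by\mid A_=\by\le\bzero\}$; it splits \emph{all} rows of $A$ into $A_1,A_2$ according to the sign of $\ba_i^\top(\bx^*-\br)$ and works with $C=\{\by\mid A_1\by\ge\bzero,~A_2\by\le\bzero\}$, a cone that depends on both $\br$ and $\bx^*$. This split guarantees, for every row, a uniform sign of $\ba_i^\top\by_j$ across the generators, so $\br+\sum_j\mu_j\by_j$ is feasible for all $0\le\mu_j\le\lambda_j$; no separate argument for strict rows is needed. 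There is no ``integrality trick'' in Schrijver's proof, and the identity and the integrality of $\ba^\top\bz^*$ you cite do not combine to bound $\ba^\top\bz^*$.

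Concretely, your $\bz^*$ can be infeasible. Take $A=\left(\begin{smallmatrix}1&-2\\1&\phantom{-}0\end{smallmatrix}\right)$, $\bb=\left(\begin{smallmatrix}0\\1\end{smallmatrix}\right)$, $\br=(\tfrac{1}{5},\tfrac{1}{10})^\top$, $\bx^*=(1,1)^\top$. Only row~$1$ is tight, so $C_\br=\{\by\mid y_1\le 2y_2\}$, for which the construction in \cref{cone-generators} gives generators $(2,1),(-2,-1),(-1,0)$. The only Carath\'eodory decomposition of $\bv=(\tfrac{4}{5},\tfrac{9}{10})^\top$ from these is $1\cdot(-1,0)+\tfrac{9}{10}\cdot(2,1)$, yielding $\bz^*=\bx^*-1\cdot(-1,0)=(2,1)^\top$, which violates $x_1\le 1$. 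Note $\ba_2^\top\bz^*=2$ is integral, yet $2>1$. With the sign-split cone $\{\by\mid y_1\le 2y_2,\,y_1\ge 0\}$ instead (row~$2$ goes into $A_1$ since $\ba_2^\top\br<\ba_2^\top\bx^*$), the generators are $(0,1),(2,1)$, the decomposition is $\tfrac{1}{2}(0,1)+\tfrac{2}{5}(2,1)$, and one recovers the feasible $\bz^*=(1,1)^\top$.
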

\begin{proof}
  An illustration of the proof is given in \Cref{fig:integral-close-diagram}.
Let $\bz$ be an integral solution to $A\bx\le\bb$. Split the equations $A\bx\le \bb$ into $A_1\bx\le \bb_1$ and $A_2\bx\le\bb_2$ such that 
$A_1\br\le A_1\bz$ and $A_2\br\ge A_2\bz$. In other words, we split $A$, $\bb$ into two sets of rows, depending on in which coordinates $\br$ resp.\ $\bz$ is larger. Now consider the cone $C=\{\bx\in\R^n \:\mid\: A_1\bx\ge\bzero,~A_2\bx\le\bzero\}$. Then, by the choice of $A_1$ and $A_2$, we have $\bz-\br\in C$ and therefore
\[ \bz-\br = \lambda_1\by_1+\cdots+\lambda_t\by_t, \]
where $\lambda_1,\ldots,\lambda_t\ge 0$ are real numbers and $\by_1,\ldots,\by_t$ are some linearly independent vectors chosen from the set of integer vectors $\{\by_1,\ldots,\by_s\}$ provided by \cref{cone-generators} satisfying $C=\cone(\by_1,\ldots,\by_s\}$. The choice of linearly independent vectors is possible due to Carath\'{e}odory's theorem. In particular, each $\by_i$ has maximal absolute value at most $\Delta$ and we have $t\le n$. 

Observe that for any $\mu_1,\ldots,\mu_t$ with $0\le \mu_i\le\lambda_i$ for $i\in[1,t]$, the vector
\[ \br+\mu_1\by_1+\cdots+\mu_t\by_t \]
is still a solution to $A\bx\le\bb$. Indeed, $A_1\by_i\ge\bzero$ and $A_2\by_i\le\bzero$ implies 
\begin{align*}
A_1(\br+\mu_1\by_1+\cdots+\mu_t\by_t)&\le A_1\bz\le \bb_1,~\text{and} \\
A_2(\br+\mu_1\by_1+\cdots+\mu_t\by_t)&\le A_2\br\le \bb_2, 
\end{align*}
and thus $A(\br+\mu_1\by_1+\cdots+\mu_t\by_t)\le\bb$. In particular, the vector
\[ \bz^*=\br+(\lambda_1-\lfloor\lambda_1\rfloor)\by_1+\cdots+(\lambda_t-\lfloor\lambda_t\rfloor)\by_t \]
is a solution to $A\bx\le\bb$. Moreover, $\bz^*$ is obtained from $\bz$ by subtracting integer multiples of the integer vectors $\by_1,\ldots,\by_t$, and thus $\bz^*$ is integral as well. Finally, we have
\[ \norm{\bz^*-\br}=\norm{(\lambda_1-\lfloor\lambda_1\rfloor)\by_1+\cdots+(\lambda_t-\lfloor\lambda_t\rfloor)\by_t}\le \sum_{i=1}^t\norm{\by_i}\le n\Delta. \qedhere\]
\end{proof}
\begin{figure}
	\centering
	\scalebox{0.6}[0.55]{
		\begin{tikzpicture}[x=1cm, y=1cm]
			\def\x{2};
			\draw[step=5mm, line width=0.2mm, black!40!white, semitransparent] (0,0) grid (15,12);
			\draw[step=5cm, line width=0.5mm, black!50!white, semitransparent] (0,0) grid (15,12);
			\draw[step=1cm, line width=0.3mm, black!90!white, semitransparent] (0,0) grid (15,12);
			\draw [black,fill=black,ultra thick] (0,0) circle (4pt);
			\node[below] at (0,0) {$(0,0)$};
			
			\node[draw,circle=2pt, fill=black, label=below  right:\Large{$\by_1$}](y1) at (0,3) {};
			
			\node[draw,ultra thick, circle, red, scale = 1.5](y1) at (0,3) {};
			
			\node[draw,ultra thick, circle, red, scale = 1.5](y1) at (1,0) {};
			
			\node[draw,circle=2pt, fill=black, label=below  right:\Large{$\by_2 = \by_t$}](y2) at (1,0) {};

			\node[draw,circle=2pt, fill=black, label=below  right:\Large{$\by_3$}](y3) at (5,1) {};

			\node[draw,circle=2pt, fill=black, label=below  right:\Large{$\by_{s-1}$}](ys-1) at (1,5) {};

			\node[draw,circle=2pt, fill=black, label=below  right:\Large{$\by_s$}](ys) at (2,3) {};
			
			\node[draw,circle=2pt, fill=black, label=below  right:\Large{$\bz-\br$}](zr) at (5,4.5) {};

			\draw [line width=1mm, blue] (\x+5,5) -- (\x+5,10);
			
			\draw [line width=1mm, blue, dashed] (\x+5,10) -- (\x+10,11);
			\draw [line width=1mm, blue, dashed] (\x+12,10) -- (\x+10,11);
			
			\draw [line width=1mm, blue] (\x+5,5) -- (\x+10,5);
			
			\draw [line width=1mm, blue] (\x+10,5) -- (\x+12,10);

			\node[draw,circle=2pt, fill=black, label=below  right:\Large{$\bz$}](z) at (\x+10,10) {};
			\node[draw,circle=2pt, fill=black, label=left:\Large{$\br$}](r) at (\x+5,5.5) {};

			\node[draw,circle, fill=red, label=left:{$\by_1'$}](zy1) at (\x+5,8.5) {};

			\node[draw,circle, fill=red, label= right:{$\by_2'$}](zy2) at (\x+6,5.5) {};
			
			\node[draw,circle=2pt, fill=green, label= right:\Large{$\bz^*$}](zs) at (\x+10,7) {};

			\draw[->,line width=1mm, red](r)--(zy1);

			\draw[->,line width=1mm, red](r)--(zy2);
			
			\draw[->,line width=1mm, black](r)--(z) node[draw=none,fill=none,font=\large, midway,above, sloped] {$\bz-\br = \lambda_1 \by_1 + \lambda_2 \by_2$};
			
			\draw[->,line width=1mm, green, text = black](r)--(zs) node[draw=none,fill=none,font=\large, above right, sloped, pos = 0.25] {$\bz^*-r{=}\{\lambda_1\} \by_1{+}\{\lambda_2\} \by_2$};
			
	\end{tikzpicture}}
	
	\caption{The main idea behind \cref{solution-integral-close}. The region enclosed by blue lines depicts the solution space of the given system of linear inequalities. As mentioned in the lemma, $\br$ and $\bz$ are respectively the given rational and integral solutions. Due to \cref{cone-generators}, we know that $C$ (containing $\bz-\br$) can be obtained as a cone of integer vectors $\by_1, \ldots \by_s$. Moreover, by Carath\'{e}odory's theorem, we know that there are $t$ linearly independent ($t=2$ in this case) vectors whose cone contains $\bz-\br$. Intuitively, these vectors ($\by_1, \by_2$) form a coordinate system for searching the required $z^*$. For $i\in \{1,2\}$, $\by_i' = \by_i+\br$, $\{\lambda_i\} = \lambda_i - \lfloor \lambda_i \rfloor$.} 
	\label{fig:integral-close-diagram}
\end{figure}

\begin{proof}[Proof of \cref{solution-affine-transformation}]
According to \cref{solution-rational-affine-transformation}, there is a
	rational solution $\tfrac{1}{a}E\bb$ to $A\bx\le\bb$, where
	$E\in\Z^{n\times\ell}$, $a\in\Z\setminus\{0\}$, $|a|\le\Delta$, and
	$\norm{E}\le\Delta$. We set $D:=\tfrac{1}{a}E$. Now since $A\bx\le\bb$
	has an integral solution, \cref{solution-integral-close} yields an
	integral solution $\bz^*$ close to $D\bb$, meaning
	$\norm{\bz^*-D\bb}\le n\Delta$. We set $\bd:=\bz^*-D\bb$. Then of
	course $D\bb+\bd=\bz^*$ is an integral solution to $A\bx\le\bb$.
	Moreover, we clearly have $\norm{\bd}\le n\Delta$. It remains to show
	that even $\ratnorm{\bd}\le n\Delta^2$. 
	Indeed, since $\bz^*$ is integral, $\bb$ is integral, and
	$D=\tfrac{1}{a}E$ with integral $E$, we know that in $\bd=\bz^*-D\bb$,
	every entry can be written with $a$ as its denominator. As this fraction
	has absolute value at most $n\Delta$ and $|a|\le\Delta$, both numerator
	and denominator have absolute value at most $n\Delta^2$.
\end{proof}
\section{Matching complexity lower bounds}\label{lower-bounds}

In this section we prove the lower bounds for \cref{detecting-wqos,monadic-decomposability-epa}.

\subparagraph{Detecting WQOs} We begin with the lower bound for \cref{detecting-wqos}. That is, we show that deciding whether an existential Presburger formula 
defines a WQO is $\coNEXP$ hard.  The idea is essentially the same as the $\coNP$ lower bound for detecting WQOs for quantifier-free formulas in \cite[Sec.~8]{bergstraesser2023ramsey}. The proof follows from reducing the satisfiability problem for $\Pi_2$ sentences to WQO-definability of existential Presburger formulas.  
Given an instance $\gamma$ of a  $\Pi_2$ sentence, we synthesize an existential Presburger formula $\varphi$  and 
show that $\varphi$ defines a WQO iff $\gamma$ is satisfiable.  The $\coNEXP$-completeness of $\Pi_2$ sentences follows from the $\NEXP$-completeness of  $\Sigma_2$ sentences  \cite{DBLP:conf/csl/Haase14}.

Consider an instance of a $\Pi_2$ sentence
 \begin{equation*} \gamma\coloneqq \forall \by\colon \exists \bx\colon \psi(\bx,\by)\end{equation*} 
where $\psi$ is quantifier-free, $\bx$ ranges over $\Z^n$, and $\by$ ranges over $\Z^m$. The goal is to construct an existential PA formula $\varphi$ such that $\varphi$ defines a WQO iff $\gamma$ is true. 
First we define  the formula \begin{equation*} \Gamma(\by)\coloneqq \exists \bx \psi(\bx,\by)\end{equation*}  Now, define the existential Presburger formula  $\varphi$ as follows. 
\begin{align*}
    \varphi((x,\bx),(y,\by))\coloneqq &(x<0\wedge y<0)\vee(x>0\wedge y>0)\vee (x<0\wedge y=0)\\
    \vee &(x=0\wedge y>0)\vee (x=0\wedge y=0) \\\vee &(x<0\wedge y>0\wedge \Gamma(\by)).
\end{align*}
Here, both $\bx$ and $\by$ range over $\Z^m$, hence $\varphi$ defines a relation in $\Z^{1+m}\times\Z^{1+m}$.
Since the existential quantifiers of $\Gamma$ can be moved in front of $\varphi$, 
 $\varphi$ is an existential Presburger instance. 
\begin{lemma}
    $\varphi$ defines a WQO if and only if $\gamma$ is true i.e. $\Gamma(w)$ is true $\forall \:\bw\in\Z^m$.
\end{lemma}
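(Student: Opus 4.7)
The plan is to prove the equivalence by handling each direction separately, relying on the fact that $\varphi$ essentially depends only on the signs of the ``extra'' coordinates $x$ and $y$, except in the final disjunct where the $(-,+)$ sign pattern is additionally gated by $\Gamma(\by)$.

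For the forward direction, I would assume $\gamma$ holds and show $\varphi$ defines a WQO. Since $\gamma$ is true, $\Gamma(\by)$ holds for every $\by\in\Z^m$, so the sixth disjunct collapses to $(x<0\wedge y>0)$, and the union of all six disjuncts is precisely $\mathrm{sign}(x)\le \mathrm{sign}(y)$ under the order $-<0<+$. Reflexivity and transitivity of $\varphi$ then reduce to reflexivity and transitivity of this total order on $\{-,0,+\}$, both of which are immediate. For the well-quasi-order property, given any infinite sequence $((x_i,\bx_i))_{i\ge 1}$, a pigeonhole argument on the three sign values yields indices $i<j$ with $\mathrm{sign}(x_i)=\mathrm{sign}(x_j)$, so one of the first, second, or fifth disjunct fires and $\varphi((x_i,\bx_i),(x_j,\bx_j))$ holds.

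For the converse, I would argue the contrapositive: assuming $\gamma$ is false, I exhibit a transitivity failure, which suffices to rule out WQO-ness. Fix some $\by^*\in\Z^m$ with $\Gamma(\by^*)$ false. The argument splits on whether $\Gamma$ has any witness at all. If there exists $\by'$ with $\Gamma(\by')$ true, consider the triple $(-1,\bzero),\ (1,\by'),\ (1,\by^*)$: the first step is related via the sixth disjunct (using $\Gamma(\by')$), the second via $(x>0\wedge y>0)$, but the composed pair $((-1,\bzero),(1,\by^*))$ matches only the sign pattern $(<,>)$ and therefore requires $\Gamma(\by^*)$, which fails. If instead $\Gamma(\by)$ is false for every $\by$, take the triple $(-1,\bzero),\ (0,\bzero),\ (1,\bzero)$: the consecutive pairs are related by the $(x<0\wedge y=0)$ and $(x=0\wedge y>0)$ disjuncts, while the composed pair again needs $\Gamma(\bzero)$, which fails.

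The main subtlety, and the step I expect to be slightly delicate, is the case split in the converse: the natural witness to a transitivity failure uses a ``true'' $\Gamma(\by')$ in the middle position, and one must notice that when no such $\by'$ exists, the uniform $\by=\bzero$ still yields a violation because the disjuncts $(x<0\wedge y=0)$ and $(x=0\wedge y>0)$ are unconditionally true. Once both cases are handled, reflexivity and transitivity together with the infinite-sequence property close the proof in the forward direction, and the transitivity counterexample closes the backward direction.
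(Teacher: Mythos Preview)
Your proof is correct and follows essentially the same strategy as the paper: a sign-partition argument for one direction and a transitivity violation for the other. However, your case split in the converse is unnecessary---the paper simply uses the single triple $((-1,\bw),(0,\bw),(1,\bw))$ for any $\bw$ with $\Gamma(\bw)$ false, which works uniformly since the disjuncts $(x<0\wedge y=0)$ and $(x=0\wedge y>0)$ do not mention~$\Gamma$.
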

\begin{proof}
    ($\Rightarrow$) Let $\varphi$ define a WQO. Assume for contradiction there exists $\bw\in\Z^m$ such that $\Gamma(\bw)$ is false. Notice that, by definition, $\varphi((-1,\bw),(0,\bw))$ and $\varphi((0,\bw),(1,\bw))$ are true. By transitivity, we must have that $\varphi((-1,\bw),(1,\bw))$ is true. Therefore, $\Gamma(\bw)$ must be true. This is a contradiction.\\
    ($\Leftarrow$) Let $\Gamma(\bw)$ be true for all $\bw\in\Z^m$. Let $A$, $B$ and $C$ be sets of all vectors over $\Z^{1+m}$ with negative, zero and positive first component, respectively. It is easy to see that $\varphi$ relates all vectors within each of $A$, $B$ and $C$. Further, $\varphi(\bu,\bv)$ is true if
	\begin{itemize}
        \item $\bu\in A $ and $\bv\in B$, or
        \item $\bu\in B $ and $\bv\in C$, or
        \item $\bu\in A $ and $\bv\in C$.
    \end{itemize}
This means that $\varphi$ must be a transitive, reflexive relation. Hence, $\varphi$  trivially defines a WQO:  in any infinite sequence $\bu_1, \bu_2, \dots$ of vectors over $\Z^{1+m}$, we can always find $\bu_i, \bu_j$ with $i < j$ such that both $\bu_i,\bu_j$ belong to either $A$ or $B$ or $C$.  Since $\varphi$ relates all vectors within each of these, the lemma follows. 
\end{proof}

\subparagraph{Monadic decomposability}
Let us now show the lower bound for \cref{monadic-decomposability-epa}, i.e., that monadic decomposability for \EPA\ formulas is $\coNEXP$-hard. The idea is the same as the $\coNP$-hardness for quantifier-free formulas in \cite{DBLP:journals/pacmpl/BergstrasserGLZ24}\footnote{As Anthony W.\ Lin and Matthew Hague explained to us, it would also not be difficult to adapt the idea of the $\coNP$ lower bound in \cite[Lem.~2]{DBLP:conf/cade/HagueLRW20}.}. We reduce from the $\Pi_2$-fragment of Presburger arithmetic, which is known to be $\coNEXP$-complete (see the discussion around \cref{nexp-sigma2}). Suppose we are given a $\Pi_2$-formula $\varphi=\forall\bx\exists\by\colon \psi(\bx,\by)$, where $\bx$ contains $n$ variables, and $\by$ contains $m$ variables. We claim that the existential formula $\kappa=\exists\by\colon\psi(\bx,\by)\vee z_1=z_2$ (which has free variables $\bx$, $z_1$, $z_2$) is monadically decomposable if and only if $\varphi$ holds (see \cref{lower-bounds}), which would clearly complete the reduction.

Indeed, if $\varphi$ holds, then $\kappa$ is satisfied for every vector in $\Z^{n+2}$ and is thus clearly monadically decomposable. Conversely, if $\varphi$ does not hold, then there is some $\ba\in\Z^n$ so that $\exists\by\colon\psi(\ba,\by)$ fails to hold. If $\kappa$ were monadically decomposable, then so would the formula $\kappa\wedge \bx=\ba$, but this is equivalent to $z_1=z_2$, which is clearly not monadically decomposable. This establishes the claim and hence $\coNEXP$-hardness.

\section{An exponential lower bound for quantifier elimination}\label{exponential-lower-bound}
\newcommand{\semantics}[1]{\llbracket #1 \rrbracket}
\newcommand{\period}[1]{|#1|_{\mathsf{p}}}
Our main results show that one can eliminate a block of existential quantifiers
with only an exponential blow-up. 
Using an example from \cite[Thm.~2]{DBLP:conf/csl/Haase14}, we will now prove an exponential lower bound, even if constants are encoded in binary.

In the presence of binary encoded constants, we cannot use Weispfenning's lower
bound argument~\cite[Thm.~3.1]{DBLP:conf/issac/Weispfenning97} (even for a
singly exponential lower bound), which compares norms of vectors in finite
sets defined by $\EPA$ vs. quantifier-free formulas. Indeed, it is a simple
consequence of Pottier's bounds on Hilbert bases~\cite{DBLP:conf/rta/Pottier91}
that finite sets defined by $\EPA$ formulas consist of at most exponentially large
vectors. With binary encoded constants, one easily constructs
quantifier-free formulas defining finite sets of exponentially large vectors.

Instead, we measure the periodicity of infinite sets. Recall that every
Presburger formula with one free variable defines an \emph{ultimately periodic}
set $S\subseteq\Z$, meaning that there are $n_0,p\in\N$, $p\ge 1$, such that
for every $n\in \Z$, $|n|\ge n_0$, we have $n+p\in S$ if and only if $n\in S$.
Such a $p$ is called a \emph{period} of $S$.  For a formula $\varphi$ with one
free variable, we denote by $\period{\varphi}$ the \emph{smallest} period of
the set defined by $\varphi$. In \cite[Thm.~2]{DBLP:conf/csl/Haase14}, Haase
constructs\footnote{See also \cref{appendix-period}.} a sequence
$(\Phi_n(x))_{n\ge 0}$ of $\EPA$ formulas of size $O(n^2)$ such that
$\period{\Phi_n}$ is at least $2^{2^{\Omega(n)}}$.  The following will imply that
the formulas $\Phi_n$ require exponential-sized quantifier-free equivalents:
\begin{lemma}
	Let $\varphi$ be quantifier-free with one free variable. Then
	$\period{\varphi}\le 2^{|\varphi|}$.
\end{lemma}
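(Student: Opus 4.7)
The plan is to exploit the compositional structure of quantifier-free formulas: show that each atom is ultimately periodic with a period that is either arbitrary (inequalities) or bounded by the modulus appearing in it (divisibility constraints), and then bound the least common multiple of all such moduli by $2^{|\varphi|}$.

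First, I would recall that every ultimately periodic set admits every multiple of its smallest period as a period, and that if $S_1,\ldots,S_k\subseteq\Z$ are ultimately periodic with common period $p$, then any Boolean combination of the $S_i$ is ultimately periodic with period $p$ (since membership in such a combination is a Boolean function of the $\mathbf{1}[x\in S_i]$, each of which is unchanged under the shift $x\mapsto x+p$ beyond some threshold). Consequently, $\period{\varphi}$ divides any common period of the sets defined by the atoms of $\varphi$.

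Next I would analyze the atoms one by one. An atom of the form $ax\le b$ defines either a half-line $(-\infty,\lfloor b/a\rfloor]$ or $[\lceil b/a\rceil,\infty)$ (or the empty set or all of $\Z$ if $a=0$); in any case the set is constant on each of $(-\infty,-N]$ and $[N,\infty)$ for $N=|b|+|a|$, so any positive integer is a period. A modulo atom $ax\equiv b\pmod m$ defines a set that is periodic with period $m$. Hence, letting $m_1,\ldots,m_r$ denote the moduli occurring in the modulo atoms of $\varphi$, the integer $M:=\mathrm{lcm}(m_1,\ldots,m_r)$ is a common period of all atoms, and therefore $\period{\varphi}\mid M$, so $\period{\varphi}\le M$.

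Finally I would bound $M$. Since constants are encoded in binary, each modulus $m_i$ contributes at least $\lceil\log_2 m_i\rceil$ symbols to $\varphi$, so $\sum_{i=1}^r\log_2 m_i\le |\varphi|$. Therefore
\[ M\le \prod_{i=1}^r m_i = 2^{\sum_{i=1}^r\log_2 m_i}\le 2^{|\varphi|}, \]
which yields $\period{\varphi}\le 2^{|\varphi|}$. The only subtlety is the ultimate-periodicity argument for Boolean combinations of inequality atoms together with modulo atoms, but this is immediate once one fixes a threshold $N$ exceeding all the $|b/a|$'s appearing in inequalities.
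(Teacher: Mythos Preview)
Your proof is correct and essentially matches the paper's approach: both rest on the observations that inequality atoms have period $1$, modulo atoms have the written modulus as a period, and the product of all occurring moduli is at most $2^{|\varphi|}$ because those moduli are binary-encoded and occupy disjoint portions of the formula. The only difference is organizational---the paper runs a structural induction using $\period{\varphi_1\vee\varphi_2}\le\period{\varphi_1}\cdot\period{\varphi_2}$ (and likewise for $\wedge$ and $\neg$), whereas you collect all moduli upfront and bound their product in one shot.
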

\begin{proof}
	We prove this by structural induction. If $\varphi$ is an atom $ax\le
	b$, then $\period{\varphi}=1$. If $\varphi$ is an atom $ax\equiv
	b\bmod{c}$ with constants $a,b,c$ written in binary, then
	$\period{\varphi}\le |c|\le 2^{|\varphi|}$.  Moreover,
	$\period{\neg\varphi}=\period{\varphi}$. Now observe that if
	$S_1,S_2\subseteq\Z$ are ultimately periodic sets, then we have
	$\period{S_1\cup S_2}\le \period{S_1}\cdot\period{S_2}$ and
	$\period{S_1\cap S_2}\le\period{S_1}\cdot\period{S_2}$.  This implies
	$\period{\varphi_1\vee\varphi_2}\le
	\period{\varphi_1}\cdot\period{\varphi_2}\le
	2^{|\varphi_1|+|\varphi_2|}\le 2^{|\varphi|}$ and similarly
	$\period{\varphi_1\wedge\varphi_2}\le
	\period{\varphi_1}\cdot\period{\varphi_2}\le
	2^{|\varphi_1|+|\varphi_2|}\le 2^{|\varphi|}$.
\end{proof}
Now indeed, if $(\varphi_n)_{n\ge 0}$ is a sequence of quantifier-free
equivalents of $(\Phi_n)_{n\ge 0}$, then for some constant $c>0$ and large $n$,
we have $2^{|\varphi_n|}\ge \period{\varphi_n}=\period{\Phi_n}\ge 2^{2^{cn}}$
and hence $|\varphi_n|\ge 2^{cn}$.

\label{beforebibliography}
\newoutputstream{pages}
\openoutputfile{main.pages.ctr}{pages}
\addtostream{pages}{\getpagerefnumber{beforebibliography}}
\closeoutputstream{pages}
\bibliographystyle{plainurl}
\bibliography{bibliography}
\newpage 

\appendix
\section{More Details for Lemma \ref{cone-generators}}
We recall Cramer's rule which has been used in the proof. 
\begin{proposition}[Cramer's rule]
    Let a system of $n$ linear equations for $n$ unknowns be represented as
\[
    Ax=b\,,
\]
where $A$ is an invertible $(n\times n)$ matrix. This system has as unique solution given by $x=(x_1,x_2,\dots,x_n)$ where, \[
    x_i=\frac{\det(A_i)}{\det(A)}
\]
$A_i$ is the matrix formed by replacing the $i$th column of $A$ by $b$.
\end{proposition}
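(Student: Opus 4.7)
The plan is to give the classical ``column-replacement'' proof, which reduces Cramer's rule to the multiplicativity of the determinant.

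For each $i\in\{1,\ldots,n\}$, I would define $X_i$ to be the $n\times n$ matrix obtained from the identity $I$ by replacing its $i$-th column with the vector of unknowns $x=(x_1,\ldots,x_n)^\top$. The first step is to verify the key identity $AX_i=A_i$ column by column: the $j$-th column of $AX_i$ is $A$ times the $j$-th column of $X_i$, which for $j\ne i$ is $Ae_j$, the $j$-th column of $A$, and for $j=i$ is $Ax=b$. These are exactly the columns of $A_i$.

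The second step is to evaluate $\det(X_i)=x_i$. Since $X_i$ agrees with the identity in every column except column $i$, row $i$ of $X_i$ has $x_i$ in position $i$ and $0$ in every other position. A Laplace expansion along row $i$ therefore leaves only one nonzero term, $(-1)^{i+i}x_i\det(M_{ii})$, where $M_{ii}$ is the minor obtained by deleting row $i$ and column $i$; but because we have removed the ``odd'' column $i$, the remaining entries form the $(n-1)\times(n-1)$ identity, whose determinant is $1$. Applying multiplicativity of the determinant to $AX_i=A_i$ then gives $\det(A)\cdot x_i=\det(A_i)$, and since $A$ is invertible we have $\det(A)\ne 0$ and may divide to conclude $x_i=\det(A_i)/\det(A)$ for every $i$. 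Uniqueness of the solution is immediate from invertibility, since $x=A^{-1}b$ is the only solution of $Ax=b$.

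The one place requiring a little care is the evaluation $\det(X_i)=x_i$; an alternative to the Laplace expansion is to apply the $n-1$ elementary column operations ``subtract $x_j$ times column $j$ from column $i$'' for $j\ne i$, each of which preserves the determinant and which together transform $X_i$ into a diagonal matrix with diagonal $(1,\ldots,1,x_i,1,\ldots,1)$, whose determinant is visibly $x_i$. The remaining ingredients, namely the column identity $AX_i=A_i$ and multiplicativity of $\det$, are entirely routine.
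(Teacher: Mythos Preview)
Your proof is correct and is the standard column-replacement argument for Cramer's rule. The paper itself does not prove this proposition: it is merely recalled in the appendix as a well-known fact (to support the computations in Lemma~\ref{cone-generators}), with no proof given. So there is nothing to compare against, and your argument stands on its own.
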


\section{Sets with large periods}\label{appendix-period}
The formula $\Phi_n$ constructed by Haase in \cite[Thm.~2]{DBLP:conf/csl/Haase14} defines the set
\[ S_n =\{a\in\N \mid \exists b\colon 1<b<2^n,~\text{$b$ divides $a$}\}. \]
and Haase argues that the smallest period of $S_n$ is $2^{2^{\Omega(n)}}$. While the latter is true, the argument in \cite{DBLP:conf/csl/Haase14} does not quite show this. 
The proof of \cite[Thm.~2]{DBLP:conf/csl/Haase14} argues that the smallest period of $S_n$ is the least common multiple of the numbers $\{1,\ldots,2^n-1\}$, which is lower bounded by $2^{2^{\Omega(n)}}$ according to Nair~\cite{nair1982chebyshev}. However, as we will see, the smallest period of $S_n$ is in fact a slightly smaller number. It is still lower bounded $2^{2^{\Omega(n)}}$, but this requires a different argument.  We present a correction.

An easy fix for the result would be to instead define the set
\begin{align*} 
	S'_n &=\{a\in\N \mid \exists b\colon 1<b<2^n,~\text{$b$ does not divide $a$}\} \\
	&=\{a\in\N \mid \exists b,c\colon 1<b<2^n,~1\le c<2^n,~\text{$b$ divides $a+c$} \}, 
\end{align*}
for which a simple modification of the formulas $\Phi_n$  in
\cite{DBLP:conf/csl/Haase14} yields a polynomial-sized $\EPA$ formula
$\Phi'_n$. Moreover, the smallest period of $S'_n$ is indeed the least common
multiple of $\{1,\ldots,2^n-1\}$, and so Nair's bound would apply.

However, one can show that the smallest period of $S_n$ is indeed lower bounded by $2^{2^{\Omega(n)}}$, just not by the least common multiple of $\{1,\ldots,2^n-1\}$.
For any natural $n\in\N$, define the \emph{primorial} of $n$, in symbols $n\#$, as the product of all primes $\le n$. Thus, if $p_1,p_2,\ldots$ is the sequence of all primes in ascending order and $\pi(n)$ is the number of all prime numbers $\le n$, then
\[ n\#=\prod_{i=1}^{\pi(n)} p_i. \]

\begin{claim}\label{period-primorial}
	The smallest period of $S_n$ is $2^n\#$. 
\end{claim}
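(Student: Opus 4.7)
The plan is to identify $S_n$ with the set of non-negative integers sharing a common factor with $P := 2^n\#$, and then argue via the Chinese Remainder Theorem (CRT) that no strictly smaller period is admissible. I assume $n \ge 2$ throughout, so that $2^n$ is composite and the primes $< 2^n$ coincide with the primes $\le 2^n$.

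First, I would establish that for every $a \in \N$, one has $a \in S_n$ if and only if $\gcd(a, P) > 1$. The forward direction is immediate: any divisor $b$ of $a$ with $1 < b < 2^n$ has a prime factor $\le b < 2^n$, which divides both $a$ and $P$. Conversely, any prime factor $p$ of $a$ with $p \le 2^n$ is itself a valid choice of $b$ (the boundary case $a = 0$ is handled directly, since $\gcd(0, P) = P > 1$ and $2$ divides $0$ with $2 < 2^n$). Because $\gcd(a, P) > 1$ depends only on $a \bmod P$, the set $S_n$ is strictly periodic with period $P$, so its smallest period is at most $P$.

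For the matching lower bound, let $q$ be any period of $S_n$. I would show that every prime $p \le 2^n$ divides $q$, forcing $P \mid q$. Fix such a prime $p$ and suppose for contradiction $p \nmid q$. Using CRT, I would choose a residue $a_0 \bmod P$ satisfying $a_0 \equiv 0 \pmod{p}$ together with $a_0 \not\equiv -q \pmod{p'}$ for each prime $p' \le 2^n$ distinct from $p$. Such an $a_0$ exists because each condition modulo $p'$ forbids only one residue out of $p' \ge 2$, and CRT combines the independent constraints on distinct primes.

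Taking $a = a_0 + kP$ for any $k$ large enough to lie in the eventually periodic region now yields the required contradiction. On the one hand, $p \mid a$ and $p \le 2^n$ give $a \in S_n$. On the other hand, $a + q$ is coprime to $P$: modulo $p$ it equals $q$, which is nonzero by the assumption $p \nmid q$, and modulo any other prime $p' \le 2^n$ it equals $a_0 + q$, which is nonzero by the construction of $a_0$. Hence $a + q \notin S_n$, contradicting that $q$ is a period. The only step requiring genuine care is the CRT choice of $a_0$; the remainder of the argument is bookkeeping.
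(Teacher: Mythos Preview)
Your proof is correct and follows essentially the same approach as the paper: characterize $S_n$ via prime divisors below $2^n$, observe that $P=2^n\#$ is a period, and use the Chinese Remainder Theorem to show that any period must be divisible by each such prime. The only cosmetic difference is the direction of the contradiction---you build an $a\in S_n$ with $a+q\notin S_n$, whereas the paper builds an $a\notin S_n$ with $a+k\in S_n$ via the slightly more explicit choice $a\equiv 1\pmod{p_i}$ for $i\ne j$ and $a\equiv -k\pmod{p_j}$.
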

\begin{proof}
	Clearly, $2^n\#$ is a period of $S_n$: $S_n$ is the set of all numbers
	that have a prime divisor among $\{2,\ldots,2^n-1\}$, and adding or
	subtracting the product of all these primes does not change that.
	
	It remains to show that $2^n\#$ is the smallest period of $S_n$.
	Suppose $k$ is a period of $S_n$. We will show that every prime $p$
	with $1<p<2^n$ is a divisor of $k$, which will clearly establish the
	claim. Let $\{p_1,\ldots,p_\ell\}$ be the primes in
	$\{2,\ldots,2^n-1\}$. Towards a contradiction, suppose there is a prime
	$p_j$, $1\le j\le \ell$, that does not divide $k$. By the Chinese
	Remainder Theorem, the system of congruences
	\begin{align*}
		x&\equiv 1\pmod{p_i} && \text{for each $i\in\{1,\ldots,\ell\}$, $i\ne j$}, \\
		x&\equiv -k\pmod{p_j} && 
	\end{align*}
	has infinitely many solutions $a\in\N$. For each such $a$, we have
	$a\notin S_n$, because $a$ is not divisible by any $p_i$, $1\le i\le
	\ell$. However, $a+k$ is divisible by $p_j$, and thus $a+k\in S_n$.
	Therefore, $k$ cannot be a period of $S_n$.
\end{proof}

Using \cref{period-primorial}, we can now obtain the $2^{2^{\Omega(n)}}$ lower
bound for the smallest period of $S_n$. This is because equation (3.14) of
\cite{BarkleySchoenfeld1962} implies that for every $m\ge 563$, we have $m\#\ge
2^{m-1}$. In particular, for $n\ge 10$, we have $2^n\#\ge 2^{2^n-1}$. This
proves that $\period{\Phi_n}$ is lower bounded by $2^{2^{\Omega(n)}}$.

\section{Incorrect lower bounds on eliminating a block of existential quantifiers}\label{appendix-weispfenning}

We elaborate on a flaw in Weispfenning's
paper~\cite{DBLP:conf/issac/Weispfenning97} which is a
consequence of misinterpreting results from
the literature, from which he incorrectly concludes that the elimination of a
block of existential quantifiers from a formula of Presburger arithmetic
results in an inherent doubly exponential blow-up.

The main result of Section~3 of~\cite{DBLP:conf/issac/Weispfenning97}
is Theorem~3.1, which states that performing quantifier elimination on
\emph{arbitrary} formulas of Presburger arithmetic results in an
inherent triply exponential blow-up, assuming unary encoding of
numbers. To this end, Weispfenning invokes a result by Fischer and
Rabin~\cite{FR98} who showed that there exists a function $g\colon \N \to
\N$ such that for almost all $n$,
\[
g(n) \ge 2^{2^{2^{n+1}}}\,,
\]
and who gave a family of formulas $\Phi_n(x,y,z)$ of Presburger
arithmetic of size linear in $n$ such that $\Phi_n(x,y,z)$ holds if
and only if $0\le x,y,z < g(n)$ and $x\cdot y = z$. He then goes on
concluding that the smallest quantifier-free formula defining the set
$\{ z \in \Z \mid \Phi_n(1,z,z) \}$ requires a formula of size at
least $g(n)$, assuming unary encoding of numbers.

Weispfenning then continues sketching how to adapt this approach in the
presence of a bounded number of quantifier alternations. To this end,
he appeals to a result by
F\"urer~\cite{DBLP:journals/tcs/Furer82}, which states that for some constant $r>0$, one can define multiplication up to 
\begin{equation}
2^{2^{(n/a)^{ra}}}\label{furer-bound}
\end{equation}
using a formula of length $n$ and $a$ quantifier alternations.  Adapting his
line of reasoning from the general case, Weispfenning applies this to $a=1$ and
concludes that eliminating a block of existential quantifiers yields an
inherent doubly exponential blow up.  F\"urer does indeed claim the existence
of such a family in the third paragraph
in~\cite[p.~108]{DBLP:journals/tcs/Furer82}.  However, a close inspection of
F\"urer's proof reveals that these formulas are not constructed \emph{for every
$a$ and $n$}, but only \emph{for infinitely many $a$ and $n$}. More
specifically,  F\"urer supposes some given $k,m\in\N$ and constructs a formula
of length $c(mk\log k+1)$ and $2m+d$ quantifier alternations (see the seventh
paragraph in~\cite[p.~108]{DBLP:journals/tcs/Furer82}). Here, $c$ and $d$
appear to be unspecified constants. By choosing $a=2m+d$ and $n=c(mk\log k+1)$,
F\"urer's claims then yield multiplication up to \eqref{furer-bound} for a
suitable $r>0$.  In particular, F\"urer's construction does not yield the
existence of such formulas for \emph{every} $a\in\N$.

Of course, from the fact that existential Presburger arithmetic allows
for defining ultimately periodic sets with a doubly exponential
period, cf.~\Cref{appendix-period}, it is not unreasonable to believe
that this could somehow be turned into a lower bound similar to the
one claimed by Weispfenning. However, such large periods can already
be produced by an exponential intersection of divisibility constraints
and thus do not imply a doubly exponential lower bound on the formula
size after eliminating a block of existentially quantified variables.

\label{afterbibliography}
\newoutputstream{pagestotal}
\openoutputfile{main.pagestotal.ctr}{pagestotal}
\addtostream{pagestotal}{\getpagerefnumber{afterbibliography}}
\closeoutputstream{pagestotal}

\newoutputstream{todos}
\openoutputfile{main.todos.ctr}{todos}
\addtostream{todos}{\arabic{@todonotes@numberoftodonotes}}
\closeoutputstream{todos}
\end{document}